\documentclass[runningheads]{llncs}

\usepackage{enumitem}
\usepackage{amssymb}
\usepackage{amsmath}
\usepackage{graphicx}
\usepackage{verbatim}
\usepackage{color}
\usepackage{ae}
\usepackage[ruled]{algorithm}
\usepackage{algorithmicx}
\usepackage{algpseudocode}
\usepackage[breaklinks,bookmarks=false]{hyperref}

\DeclareMathOperator{\LCPref}{LCPref}
\DeclareMathOperator{\LCSuf}{LCSuf}

\DeclareMathOperator{\led}{led}
\DeclareMathOperator{\ed}{ed}
\DeclareMathOperator{\leftval}{leftval}

\DeclareMathOperator{\TotalLen}{TotalLen}
\DeclareMathOperator{\mism}{mism}

\newlength\mylen
\settowidth\mylen{\textbf{Case~5.}}
\newlist{mycases}{enumerate}{1}
\setlist[mycases,1]{label=\textbf{Case~\arabic*.}, 
  labelwidth=\dimexpr-\mylen-\labelsep\relax,leftmargin=0pt,align=right}
\newcommand\mynobreakpar{\par\nobreak\@afterheading}

\newcommand{\proc}[1]{\textnormal{\scshape#1}}

\begin{document}

\title{Beating $\mathcal{O}(nm)$ in approximate LZW-compressed pattern matching}

\author{Pawe\l{} Gawrychowski\inst{1} and Damian Straszak\inst{2}}
\authorrunning{}
\institute{
Max-Planck-Institut f\"{u}r Informatik, Saarbr\"ucken, Germany,\\ \email{gawry@cs.uni.wroc.pl}
\and
Institute of Computer Science, University of Wroc{\l}aw, Poland,\\ \email{damian.straszak@gmail.com}
}

\maketitle

\begin{abstract}
Given an LZW/LZ78 compressed text, we want to find an approximate occurrence of a given pattern of length $m$. The goal is to achieve time complexity
depending on the size $n$ of the compressed representation of the text instead of its length. We consider two specific definitions of approximate matching, namely the Hamming distance and the edit distance, and show how to achieve $\mathcal{O}(n\sqrt{m}k^{2})$ and $\mathcal{O}(n\sqrt{m}k^{3})$ running time, respectively, where $k$ is the bound on the distance. Both
algorithms use just linear space.
Even for very small values of $k$, the best previously known solutions required $\Omega(nm)$ time. Our main contribution is applying a periodicity-based argument in a way that is computationally effective even if we need to operate on a compressed representation of a string, while the previous solutions were either based on a dynamic programming, or a black-box application of tools developed for uncompressed strings.
\keywords{approximate pattern matching, edit distance, Lempel-Ziv compression}
\end{abstract}

\section{Introduction}

Pattern matching, which is the question of locating an occurrence of a given pattern in a text, is the most natural task as far as processing text data is concerned. Virtually any programming language contains a more or less efficient procedure for solving this problem, and any text processing application, including the widely available \texttt{grep} utility, gives users the means of solving it.

While exact pattern matching is well-understood, and in particular many linear time solutions are known~\cite{Jewels}, it seems that in its approximate version, where one asks for occurrences that are similar to a given pattern, there is still some room for improvement. Two most natural versions of the question are pattern matching with errors, where one ask for a substring of the text with small edit distance to the pattern, and pattern matching with mismatches, where one is interested in a substring with small Hamming distance to the pattern, which is simply the number of mismatched characters. It is known that if $N$ is the length of the text and $k$ is the number of allowed errors or mismatches, both problems can be solved in $\mathcal{O}(Nk)$ time~\cite{LandauMismatches,Landau}, and in fact the complexity for the latter version can be improved to $\mathcal{O}(N\sqrt{k\log k})$~\cite{AmirMismatches}. Under the natural assumption that the value of $k$ is small, one can do even better, and solve the problems in $\mathcal{O}(N+\frac{Nk^{4}}{m})$~\cite{ColeHariharan} and $\mathcal{O}((N+\frac{Nk^{3}}{m})\log k)$~\cite{AmirMismatches} time complexity, respectively, which might be linear in $N$ if $k$ is small enough. Unfortunately, in some cases even a linear time complexity might be not good enough. This is the case when we are talking about large collections of repetitive data stored in
a compressed form. Then the length of the text $N$ might be substantially larger than the size $n$ of its actual representation, and the goal is to achieve a running time depending on $n$, not $N$. Whether achieving such goal is possible clearly depends on the power of the compression method. In this paper we focus on the LZW/LZ78 compression~\cite{LZW,LZ78}, which is not as powerful as the more general LZ77 method, but still has some nice theoretical properties, and is used in real-world applications. It is known that exact LZW-compressed pattern matching can be solved very efficiently~\cite{Amir,GawrychowskiLZW}, even in the fully compressed version, where both the text and the pattern are LZW-compressed~\cite{GawrychowskiFully}. The obvious question is how efficiently can we solve approximate LZW-compressed pattern matching?

The best previously known solution by K{\"a}rkk{\"a}inen, Navarro, and Ukkonen~\cite{Juha}, locates all $occ$ occurrences with up to $k$ errors using $\mathcal{O}(nmk+occ)$ time and $\mathcal{O}(nmk)$ space. More precisely, it outputs all ending positions $j$ such that there is $i$ for which the edit distance between $t[i..j]$ and $p$ is at most $k$. In some cases, this time bound can be decreased using the idea of Bille, Fagerberg, and G{\o}rtz~\cite{Bille}, who presented a way to translate all uncompressed pattern matching bounds into the compressed setting. Their approach works for both the edit and Hamming distance, and by plugging the best known uncompressed pattern matching solutions, we can get:
\begin{enumerate}
\item $\mathcal{O}(nmk+occ)$ time and $\mathcal{O}(\frac{n}{mk}+m+occ)$ space solution for the edit distance,
\item $\mathcal{O}(nk^{4}+nm+occ)$ time and $\mathcal{O}(\frac{n}{k^{4}+m}+m+occ)$ space solution for the edit distance,
\item $\mathcal{O}(nk^{3}\log k+nm\log k+occ)$ time and $\mathcal{O}(\frac{n}{(k^{3}+m)\log k}+m+occ)$ space solution for the Hamming distance.
\end{enumerate}
While the space complexity of the resulting algorithms is small, even for constant values of $k$ the time complexity is $\Omega(nm)$, and in fact this is an inherent shortcoming of the approach: the best we can hope for is $\mathcal{O}(nm)$ for sufficiently small values of $k$, say, $k=\mathcal{O}(m^{1/3})$. In this paper we show that in fact this barrier can be broken. We prove that for the Hamming distance, running time of $\mathcal{O}(n\sqrt{m}k^{2})$ is possible, which for $k=o(m^{1/4})$ is $o(nm)$. Then we show how to extend the algorithm by building on the ideas of Cole and Hariharan~\cite{ColeHariharan}, and achieve $\mathcal{O}(n\sqrt{m}k^{3})$ for the edit distance. Both algorithms use $O(n+m)$ space. For the sake of making the description clear, we concentrate on the question of detecting just one occurrence, but our algorithms generalize to generating all of them in the left-to-right order.

Some of our methods are based on the concepts first used by Cole and Hariharan~\cite{ColeHariharan}, and later by Amir, Lewenstein, and Porat~\cite{AmirMismatches}. We would like to
stress out that applying them in the compressed setting is not just a trivial exercise, and creates new challenges. For instance, verifying whether a given position corresponds
to an occurrence with no more than $k$ mismatches in $\mathcal{O}(k)$ time is straightforward in the uncompressed setting using the suffix tree, but in our case requires some additional ideas.

We start with some basic tools in Sections~\ref{section:preliminaries} and~\ref{section:further}. Then we distinguish between two types of matches, called internal and crossing. Detecting the former is relatively straightforward in both versions. To detect the latter, we reduce the question to a problem that is easier to work with, which we call pattern matching in pc-strings, in Section~\ref{section:reduction}. To solve pattern matching with mismatches in pc-strings, we distinguish between
two cases depending on how periodic the pattern is. For this we apply the concept of breaks, heavily used in the previous papers on approximate pattern matching. In our case, though,
we are looking at $z$-breaks for some value of $z$ that will be specified in the very end. If there are many such breaks, or in other words the pattern is not very repetitive, we
can solve the problem by reducing to a generalization of (exact) compressed pattern matching with multiple patterns as shown in Section~\ref{section:weakly}. Otherwise, the pattern is {\it highly periodic},
and the situation is more complicated. In Section~\ref{section:highly} we show how to exploit the regular structure of such pattern to construct an efficient algorithm.
More precisely, we construct a small set of candidates for a potential occurrence, and verify them one by one. Then
in Section~\ref{section:faster}, which is the most technical part of the paper, we speed up the method using a new technique which considers all candidates in a more global manner instead of operating on them separately.
Finally, in Section~\ref{section:errors} we generalize the solution to solve the version with errors.

\section{Preliminaries}
\label{section:preliminaries}

We are given a text $t[1..N]$ and a pattern $p[1..m]$, both are strings over an integer alphabet $\Sigma$. We assume that $m\leq N$ and $\Sigma=\{1,2,\ldots,N\}$. The pattern is given explicitly, but the text is described implicitly using the LZW/LZ78 compression scheme. Such scheme is defined as follows: we partition the text into $n$ disjoint fragments $t=z_{1} z_{2} \ldots z_{n}$, where each fragment $z_{i}$ is either a single letter, i.e., $z_{i}=c$, or a word of the form $z_{i}=z_{j} c$, where $j<i$. The fragments $z_{i}$ are usually called the {\it codewords}, and because their set is closed under taking prefixes, we may represent it as a trie, which will be further denoted by $T$. Depending on how we choose the partition and encode the codewords, we get different concrete compression methods, say LZW or LZ78. Our methods do not depend on such technicalities as long as we are given $T$ and the text is described as a list of pointers to the nodes of $T$ representing the successive fragments. From now on whenever we mention compression, we mean such representation.

The Hamming distance between two strings of the same length is simply the number of positions where their corresponding characters differ. The edit distance $\ed(s,t)$ is the minimal number of operations necessary to transform $s$ into $t$, where an operation is an insertion, replacement, or removal of a character.

The first problem we consider is {\it compressed pattern matching with mismatches}, where we are given a compressed representation of a text $t$, a pattern $p$, and a positive integer $k$. We want to find $i$ such that the Hamming distance between $t[i..i+m-1]$ and the pattern is at most $k$. 
We also consider {\it compressed pattern matching with errors}, where the goal is to find $i$ and $j$ such that the edit distance between $t[i..j]$ and $p$ is at most $k$.
Our solutions generalize to generating all occurrences, where we are asked to report the ending positions of all matches (notice that in pattern matching with errors, there might
be multiple occurrences ending at the same position, and then we want to report such position just once), but we concentrate on the easier to describe version with just one
occurrence.

To efficiently operate on the compressed text and the pattern we need a number of data structures. We assume that the suffix array of the pattern is available. A description of this data structure, together with a simple linear time construction algorithm, can be found in~\cite{Karkkainen}. Notice that it assumes that the alphabet is of polynomial size, which is the case here as $N\leq n^{2}$. By adding a constant time range minimum query structure~\cite{Bender}, we get the following useful primitive.
\begin{lemma}\label{lemma:lcapattern}
A string can be preprocessed in linear time so that given its any two fragments we can find their longest common prefix and longest common suffix in constant time.
\end{lemma}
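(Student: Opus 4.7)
The plan is to combine the suffix array (assumed available by hypothesis) with the standard LCP array and a constant-time range minimum query structure, and do the symmetric construction on the reversed string for the longest common suffix.

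First, I would build the LCP array of the suffix array of the string $s$ in linear time using, say, Kasai's algorithm, which stores for each pair of lexicographically consecutive suffixes the length of their longest common prefix. On top of this array I would plug in the RMQ structure of Bender and Farach-Colton~\cite{Bender} cited immediately before the lemma, obtaining in linear time a data structure that answers range minimum queries on the LCP array in constant time. A classical argument then gives the LCP of two arbitrary suffixes $s[i_1\twodots]$ and $s[i_2\twodots]$ as the minimum of the LCP array on the range $[\rank(i_1)+1, \rank(i_2)]$ (assuming $\rank(i_1) < \rank(i_2)$), which is a single RMQ.

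To answer a query asking for the longest common prefix of two fragments $s[i_1\twodots j_1]$ and $s[i_2\twodots j_2]$, I would compute the LCP of the corresponding full suffixes as above and then cap the result by $\min(j_1-i_1+1,\, j_2-i_2+1)$, so the answer does not extend past the end of either fragment. This is clearly correct and takes constant time. For the longest common suffix, I would apply exactly the same construction to the reverse string $s^R$: a common suffix of two fragments of $s$ corresponds to a common prefix of two fragments of $s^R$, obtained by mapping positions via $i \mapsto |s|-i+1$, so one extra suffix array, LCP array, and RMQ structure on $s^R$ suffices.

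There is no serious obstacle here: every ingredient is off-the-shelf and each is constructible in linear time, so the total preprocessing is linear and each query is constant time. The only thing to be careful about is correctly handling the capping by fragment lengths and the index translation between $s$ and $s^R$, but these are routine.
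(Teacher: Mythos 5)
Your proposal is correct and matches the paper's intended argument: the paper derives this lemma directly from the linear-time suffix array construction plus the constant-time RMQ structure of Bender and Farach-Colton, exactly as you do, with the reversed string handling the longest-common-suffix case. The capping by fragment lengths and the index translation are indeed the only details to watch, and you have handled both.
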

It may seem surprising, but the problem of efficiently retrieving a single letter given a position in the text is rather nontrivial. By ``position in the text" we mean a pair $(i,j)$
which stands for ``$j$-th letter of $z_i$". This problem can be reduced to accessing the $k$-th ancestor of a node in a tree. A straightforward algorithm gives us $\mathcal{O}(n \log n)$ preprocessing time and $\mathcal{O}(\log n)$ time per query, with $n$ being the size of the tree as in our setting, and a much better solution is known~\cite{AlstrupAncestor}, giving the following tool.
\begin{lemma}\label{lemma:lettertext}
We can preprocess the text in linear time so that given any position there we can retrieve the corresponding letter in constant time.
\end{lemma}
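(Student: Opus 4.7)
The plan is to reduce the problem to a level ancestor query on the trie $T$ of codewords and then invoke the Alstrup--Holm data structure cited in the paper.

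First, I would build $T$ explicitly in linear time from the input description (it is given as a list of pointers to the nodes of $T$, or is easy to recover from the LZW encoding). Each codeword $z_i$ corresponds to a unique node $v_i$ of $T$, and the depth of $v_i$ equals $|z_i|$. I would also store, for every non-root node $u$, the letter $\ell(u)$ labeling the edge from its parent to $u$; this takes $O(n)$ space and is immediate from the construction of $T$.

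The key observation is the following: the $j$-th letter of $z_i$ is exactly the letter labeling the edge into the unique ancestor of $v_i$ at depth $j$, i.e.\ the $(|z_i|-j)$-th proper ancestor of $v_i$. Call this node $u$; then the answer to the query $(i,j)$ is $\ell(u)$. So a constant-time oracle for level-ancestor queries on $T$ yields a constant-time letter-access oracle, since $|z_i|=\depth(v_i)$ is stored with $v_i$ and $\ell(\cdot)$ is stored with every node.

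Thus it remains to preprocess $T$ for level-ancestor queries in linear time with constant query time. This is exactly the problem solved by Alstrup and Holm~\cite{AlstrupAncestor}, which the excerpt invokes as a black box. Plugging it in gives the claimed bounds. The only step requiring any nontrivial machinery is this level-ancestor structure; the reduction itself is a one-line bookkeeping argument, so I do not anticipate a real obstacle beyond correctly identifying the reduction and citing the appropriate result.
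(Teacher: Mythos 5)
Your proposal is correct and follows exactly the route the paper takes: reduce letter access at position $(i,j)$ to a level-ancestor query on the trie $T$ (with edge labels stored at the nodes) and invoke the linear-preprocessing, constant-query level-ancestor structure of~\cite{AlstrupAncestor}. Nothing is missing.
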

The next two lemmas will provide methods for comparing substrings of the text with substrings of the pattern. The usual solution in such cases is to build suffix tree of the text concatenated with the pattern. However, this would require decompressing the text and result in a structure of size $\Omega(N)$, which is unacceptable. Hence we need another method.

From now on, let $T_p$ denote a trie representing the set of strings $\{z_1, z_2, ..., z_n\}\cup \{p[1..k]:1\leq k\leq m\}$, in other words $T_p$ is $T$ extended by one path representing the whole pattern. The size of $T_p$ is $|T_p|=\mathcal{O}(n+m)$.

\begin{figure}[t]
\centering
\includegraphics[scale=0.7]{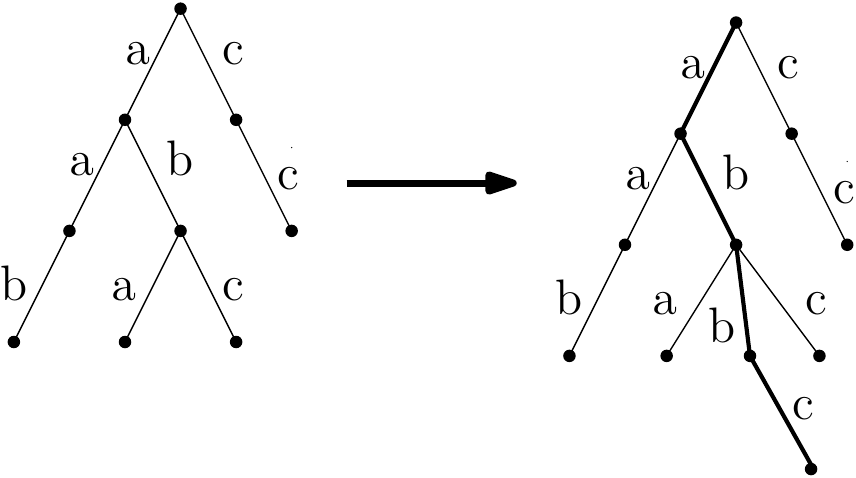}
\caption{Constructing $T_p$ from $T$ and $p=abbc$.}
\label{figure:trie_pattern}
\end{figure}

\begin{definition}
Let $S$ be a trie of strings. We say that $s$ is a chunk (in $S$) if $s$ is a subword of some root-to-leaf path in $S$. We represent a chunk as a pair consisting of a node in $S$ corresponding to the last letter in $s$ and the length $|s|$.
\end{definition}

$\LCSuf$ is the longest common suffix and $\LCPref$ is the longest prefix of given two strings. We are interested in computing them for any two chunks.

\begin{lemma}\label{lemma:LCSuf}
Given any trie $S$ we can preprocess it in linear time so that the $\LCSuf$ of any two chunks can be computed in constant time.
\end{lemma}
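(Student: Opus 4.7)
The plan is to reduce the constant-time $\LCSuf$ query on chunks to range-minimum queries on a suitable LCP array built over the trie.

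First, I would make the trivial reduction from chunks to root-to-node strings. For chunks $(v_1, \ell_1)$ and $(v_2, \ell_2)$, letting $P(v)$ denote the root-to-$v$ string in $S$, we have
\[ \LCSuf\bigl((v_1,\ell_1), (v_2,\ell_2)\bigr) = \min\bigl(\ell_1,\, \ell_2,\, L(v_1, v_2)\bigr), \]
where $L(v_1, v_2) := \LCSuf(P(v_1), P(v_2))$. Hence it suffices to answer the $L$-queries on arbitrary pairs of trie nodes in constant time, and then cap by $\min(\ell_1, \ell_2)$.

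Next, rewrite $L(v_1, v_2) = \LCP(P(v_1)^R, P(v_2)^R)$, the LCP of the reversed root-to-node strings. I would sort the nodes of $S$ by the lexicographic order of $P(v)^R$, assigning each $v$ a rank $r(v) \in \{1,\ldots,|S|\}$, and construct the LCP array $H$ with $H[i] = \LCP(P(u_i)^R, P(u_{i+1})^R)$, where $u_i$ denotes the node of rank $i$. The standard suffix-array identity then gives $L(v_1, v_2) = \min\{H[i] : r(v_1)\leq i < r(v_2)\}$ (after swapping $v_1, v_2$ if needed), so a Bender-Farach constant-time RMQ structure built over $H$ answers each $L$-query in $O(1)$ after linear-time preprocessing.

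The main obstacle is producing $r$ and $H$ in $O(|S|)$ time, since directly sorting the $|S|$ reversed strings would cost $\Omega(|S|^2)$ in the worst case. I would import a trie-adapted suffix-array construction, such as the XBWT of Ferragina et al., which orders trie nodes by their upward path labels in linear time for integer alphabets; a trie analog of Kasai's algorithm then yields $H$ in linear time. Combining the sorted order, the LCP array, and constant-time RMQ with the initial reduction gives the claimed linear preprocessing and constant-time LCSuf queries.
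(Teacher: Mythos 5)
Your reduction from chunks to full root-to-node strings is correct and coincides with the paper's first step, and your overall plan is the suffix-array dual of the paper's construction: the paper builds the compacted trie of $\{P(v)^R : v\in S\}$ (the ``suffix tree of a tree'' of Shibuya, constructible in linear time for integer alphabets) and answers an $\LCSuf$ query by one constant-time LCA query, whereas you build the corresponding sorted array, its LCP array $H$, and an RMQ structure. The range-minimum identity you invoke is standard and valid once the sorted order and $H$ are available, since distinct trie nodes carry distinct root-to-node strings.

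The gap is the linear-time construction of $H$. The path-sorting step is citable (XBWT), but ``a trie analog of Kasai's algorithm'' does not go through as stated. Kasai's amortization rests on the fact that deleting the first character maps suffix $i$ to suffix $i+1$ \emph{bijectively}, so each computed LCP value hands its budget to exactly one successor and the total number of character comparisons telescopes to $\mathcal{O}(n)$. In a trie, deleting the first character of $P(v)^R$ yields $P(\mathrm{parent}(v))^R$, so the Kasai inequality only gives $\mathrm{PLCP}[\mathrm{parent}(v)]\geq \mathrm{PLCP}[v]-1$: information flows from children to parents, a parent with many children can exploit the bound from only one of them, and leaves receive no starting bound at all. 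Concretely, take the trie of all strings $a^i$ and $a^ib$ for $i\leq |S|/2$: each leaf with root-to-node string $a^ib$ has reversed string $ba^i$, whose predecessor in the sorted order is $ba^{i-1}$ with LCP equal to $i$, and since these leaves inherit nothing, the naive adaptation performs $\Theta(|S|^2)$ character comparisons. The fact you need (sorted order and LCP array of a trie in linear time) is true, but the standard route to it is to first build the compacted trie of the reversed root-to-node strings and read the array off an Euler tour --- that is, essentially the paper's construction, at which point constant-time LCA already answers the queries and your array/RMQ layer becomes redundant.
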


\begin{proof}
Observe that we may concentrate only on the chunks beginning in the root. Let $W$ be the set of all such chunks. Imagine now a compacted trie $S^R$ representing all words in the set $W^R=\{w^R:w\in W\}$, where $w^R$ stands for the reversed string $w$. It may be seen that the size of the tree is $\mathcal{O}(|S|)$ because each $w\in W$ adds at most one leaf. $S^R$ is an interesting generalization of a suffix tree of a regular string, and as in the basic case it may be constructed in linear time assuming an integer alphabet of polynomial size~\cite{Shibuya99}. After building such a tree, we augment it with a linear size structure allowing constant time LCA queries~\cite{Bender}. This, along with pointers between nodes of $S$ and the corresponding nodes of $S^R$, gives the claimed result.
\qed
\end{proof}

\begin{lemma}\label{lemma:LCPref}
We can preprocess the text and the pattern in $\mathcal{O}(m+n)$ time so that given a chunk in $T$ and a subword of the pattern we can find their $\LCSuf$ in $\mathcal{O}(1)$ time and $\LCPref$ in $\mathcal{O}(\log m)$ time.
\end{lemma}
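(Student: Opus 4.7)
The $\LCSuf$ part is essentially immediate from the previous lemma. Every chunk in $T$ is also a chunk in $T_p$ (because $T$ is a subtrie of $T_p$), and every subword $p[i\twodots j]$ of the pattern is a chunk in $T_p$ (because $p$ itself is a root-to-leaf path in $T_p$, so $p[i\twodots j]$ is a subword of it, represented by the node corresponding to $p[1\twodots j]$ together with the length $j-i+1$). Building $T_p$ in $\mathcal{O}(n+m)$ time and applying Lemma~\ref{lemma:LCSuf} to $T_p$ therefore gives $\LCSuf$ of the two requested strings in $\mathcal{O}(1)$ time.

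For $\LCPref$, the strategy is a binary search driven by $\LCSuf$. Let $c$ be the chunk in $T$, represented by the pair $(v,\ell_c)$, and let $p[i\twodots j]$ be the pattern subword. For any candidate length $\ell\le \min(\ell_c,\,j-i+1)$, the prefix $c[1\twodots\ell]$ is itself a chunk in $T$ (and hence in $T_p$), represented by the pair $(w,\ell)$, where $w$ is the ancestor of $v$ at depth $\depth(v)-(\ell_c-\ell)$. Likewise $p[i\twodots i+\ell-1]$ is represented by a pair we can read off the pattern path in $T_p$. Observe that two strings of equal length $\ell$ coincide iff their longest common suffix has length $\ell$, so by Lemma~\ref{lemma:LCSuf} applied inside $T_p$ we can test, in $\mathcal{O}(1)$ time, whether $c[1\twodots\ell]=p[i\twodots i+\ell-1]$. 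The value of $\LCPref$ is the largest $\ell$ for which this test succeeds, and it can be located by ordinary binary search on $\ell$, using $\mathcal{O}(\log m)$ such tests.

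To make each test $\mathcal{O}(1)$ we need to locate the required ancestor $w$ of $v$ in $T$ in constant time. This is exactly the level-ancestor problem on the trie $T$, for which there is a standard linear-preprocessing, constant-query-time data structure~\cite{AlstrupAncestor}; we attach it to $T$ during preprocessing. Combined with the $\mathcal{O}(1)$ $\LCSuf$ oracle on $T_p$ and the trivial pattern-indexing table mapping every $k$ to the node in $T_p$ representing $p[1\twodots k]$, each iteration of the binary search costs $\mathcal{O}(1)$, yielding the claimed $\mathcal{O}(\log m)$ bound for $\LCPref$.

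The only mild subtlety is making sure that all objects fed into the $\LCSuf$ primitive are honest chunks of $T_p$; this is why we work with prefixes of $c$ (which shrink within the same root-to-$v$ path in $T$, hence in $T_p$) rather than with arbitrary truncations, and why we use $T_p$ rather than $T$ itself, so that the pattern subword is covered by the same structure. Everything else is routine preprocessing within the stated $\mathcal{O}(n+m)$ budget.
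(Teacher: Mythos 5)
Your proof is correct and follows essentially the same route as the paper's: apply Lemma~\ref{lemma:LCSuf} to $T_p$ for the constant-time $\LCSuf$, then binary search on the prefix length using $\LCSuf$-based equality tests, with constant-time level-ancestor queries to realize each truncated chunk. The only difference is presentational — you spell out the equality test and the level-ancestor bookkeeping that the paper leaves implicit.
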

\begin{proof}
We use the previous lemma for $S=T_p$. We get immediately the result for $\LCSuf$. Now consider an $\LCPref$ query: we use binary search together with $\LCSuf$ queries to get the answer in $\mathcal{O}(\log m)$ time. Note that in such a procedure we need often to find the $k$-th ancestor in the tree, but we already know by Lemma \ref{lemma:lettertext} that it can be done in constant time.
\qed
\end{proof}

We need also some basic concepts from combinatorics on words. $\alpha$ is a period of a string $s$ if $s[i]=s[i+\alpha]$ holds for every $i=1,2,\ldots,|s|-\alpha$, or in other words
we can write $s=w^i u$, where $|w|=\alpha$ and $u\neq w$ is a prefix of $w$. The smallest such $\alpha$
is called {\bf the} period of $s$. If the period of $s$ is at most $\frac{|s|}{2}$, $s$ is periodic, and otherwise we call it a break, or $|s|$-break. A word is primitive if it cannot be
represented as a nontrivial power of some other word. For every word $s$, there exists its unique cyclic shift $s'$ which is lexicographically smallest, and we call $s'$
the cyclic representative of $s$. For a periodic $s$, the cyclic representative of $w$ corresponding to the period of $s$ is called the canonical period of $s$. One of the basic
results concerning periods is the periodicity lemma, which says that if $q$ and $q'$ are both periods of $s$, and $q+q'\leq |s|$, so is $\gcd(q,q')$.

\section{Further preprocessing}
\label{section:further}

From now on we fix $k$ to be the number of allowed mismatches (errors) in our problem. We will say in short that the pattern matches at some position in the text if the Hamming distance (or the edit distance) between the pattern and the fragment of the text starting at this position is at most $k$. It is natural to distinguish between two types of matches: {\it internal matches} (the pattern lies fully within a single codeword) and  {\it crossing matches} (the pattern crosses some boundary between two codewords). We are now going to show that one can easily find all internal matches.

\begin{lemma}\label{lemma:internal_matches}
In case of pattern matching with mismatches we can find all internal matches in $\mathcal{O}(nk+m)$ time and if needed report all of them in $\mathcal{O}(1)$ time per occurrence.
\end{lemma}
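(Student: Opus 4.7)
The plan is to process the codeword trie $T$ in a DFS and, for each node $v_i$ of depth $L_i\geq m$, check whether the suffix of length $m$ of the codeword $z_i$ represented by $v_i$ matches the pattern $p$ with at most $k$ mismatches. The key observation is that if $z_i=z_j c$, then the set of candidate starting positions of $p$ within $z_i$ equals the one within $z_j$ together with a single new position $L_i-m+1$, namely the position at which $p$ ends exactly at the last character of $z_i$. Hence one check per trie node discovers every internal match.

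To perform each check in $O(k)$ time I would use standard kangaroo jumping, but matching from the right with $\LCSuf$ rather than $\LCPref$. Starting from the chunk of length $m$ ending at $v_i$ and the full pattern $p$, I repeatedly invoke Lemma~\ref{lemma:LCPref} to obtain in $O(1)$ the length $d$ of their common suffix, record one mismatch, and continue with chunks that are shorter by $d+1$ on the right. On the pattern side this is just a shorter prefix $p[1..\ell-d-1]$; on the text side it is the chunk ending at the $(d+1)$-th ancestor of the current node, which is retrieved in $O(1)$ via the level-ancestor structure underlying Lemma~\ref{lemma:lettertext}. After at most $k+1$ such steps we either exceed $k$ mismatches or exhaust one of the chunks, so each check costs $O(k)$; summing over all $n$ codewords and adding pattern preprocessing gives $\mathcal{O}(nk+m)$.

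For reporting, during the same DFS I would maintain at every node $v_i$ a pointer to its nearest proper ancestor at which the check above succeeded, or to a sentinel when none exists. These pointers form, for every node, a linked list of matching ancestors along its root-to-$v_i$ path, and this list enumerates precisely the starting positions within $z_i$ at which an internal match occurs. When a codeword $z_i$ is encountered at some text position we walk this list and translate each local position into an absolute text position in $O(1)$, yielding $O(1)$ per reported occurrence. The main obstacle is keeping each kangaroo step at constant time: Lemma~\ref{lemma:LCPref} only provides $O(\log m)$ for $\LCPref$ between a trie chunk and a pattern subword, which would inflate the bound to $\mathcal{O}(nk\log m)$. Performing the comparison from the right via $\LCSuf$, combined with constant-time level-ancestor navigation in $T$, sidesteps this and gives exactly the stated bound.
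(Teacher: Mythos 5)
Your proposal is correct and follows essentially the same route as the paper: each internal match corresponds to the length-$m$ chunk ending at some trie node, and each such chunk is verified in $\mathcal{O}(k)$ time by kangaroo jumping with constant-time $\LCSuf$ queries (Lemma~\ref{lemma:LCSuf} applied to $T_p$), giving $\mathcal{O}(nk+m)$ overall. Your extra details on using level ancestors to shorten the chunk and on threading ancestor pointers for reporting are consistent with what the paper leaves as ``straightforward.''
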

\begin{proof}
Each internal match can be seen as a chunk in $T$ of length $m$ (but of course one such chunk may correspond to many places in the text). For every such chunk we verify whether its Hamming distance with the pattern is at most $k$. This can be done in $\mathcal{O}(k)$ time using constant time $\LCSuf$ queries to jump over whole fragments with no mismatches, which is the standard method used in the uncompressed setting~\cite{LandauMismatches}. In total there are $\mathcal{O}(n)$ chunks to verify, which gives $\mathcal{O}(nk+m)$ time. Reporting all occurrences is straightforward if we know which chunks match the pattern.
\qed
\end{proof}

\begin{lemma}\label{lemma:internal_matches_errors}
In case of pattern matching with errors we can find all internal matches in $\mathcal{O}(nk^2+m)$ time and if needed report all of them in $\mathcal{O}(1)$ time per occurrence.
\end{lemma}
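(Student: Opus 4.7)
The plan is to reuse the skeleton of Lemma~\ref{lemma:internal_matches}: an internal match is a chunk of $T$ whose edit distance to $p$ is at most $k$. Since each edit operation changes length by at most one, such a chunk has length $\ell\in[m-k,m+k]$. For each node $v$ of $T$ of depth at least $m-k$ there are only $O(k)$ candidate chunks ending at $v$, one per admissible length, and the goal is to verify all of them jointly in $O(k^2)$ time per $v$ by running a single Landau--Vishkin wavefront anchored at the common endpoint of their alignments.

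Fix such a $v$, let $U$ denote the root-to-$v$ string, and set $d^{*}=|U|-m$. The chunk of length $\ell$ ending at $v$ matches iff there is an alignment path from $(|U|-\ell,0)$ to $(|U|,m)$ of cost at most $k$ in the edit-distance grid of $U$ against $p$. All these candidate alignments terminate at the same cell $(|U|,m)$, so I would compute the Landau--Vishkin table \emph{backwards} from it: for each $e\in[0,k]$ and each diagonal $d$ with $|d-d^{*}|\le e$, define $R[e,d]$ as the smallest row $r$ from which $(r,r-d)$ can reach $(|U|,m)$ with at most $e$ edit operations. The table has $O(k^2)$ cells, filled by the usual Landau--Vishkin recurrence which propagates from $R[e-1,d]$ and $R[e-1,d\pm 1]$ and then performs a backwards ``free'' extension of length equal to the longest common suffix of the current prefix of $U$ and the current prefix of $p$. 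Once the table is complete, the chunk of length $\ell$ is a match iff some $e\le k$ satisfies $R[e,|U|-\ell]\le |U|-\ell$; reading this off for all $\ell\in[m-k,m+k]$ costs an additional $O(k)$ per $v$.

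The reason the per-node cost is $O(k^2)$ rather than $O(k^2\log m)$ is that every free-extension query is an $\LCSuf$ between two chunks of $T_p$: a prefix of $U$ is a root-to-ancestor path in $T\subseteq T_p$, and every prefix of $p$ was explicitly added to $T_p$ by construction. Hence Lemma~\ref{lemma:LCSuf} applied with $S=T_p$ answers each query in $O(1)$. Summing $O(k^2)$ over the $O(n)$ nodes of $T$ of sufficient depth and adding the $O(n+m)$ preprocessing of $T_p$ and its LCA structure yields the claimed $O(nk^2+m)$ bound. Reporting textual occurrences given the list of matching chunks proceeds exactly as in Lemma~\ref{lemma:internal_matches} at $O(1)$ per occurrence.

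The main obstacle is precisely the $\LCSuf$--$\LCPref$ asymmetry in Lemma~\ref{lemma:LCPref}: a naive forward Landau--Vishkin would need $\LCPref$ queries between chunks and pattern substrings, each costing $O(\log m)$ and inflating the total to $O(nk^2\log m)$. Anchoring the wavefront at the shared endpoint $(|U|,m)$ and growing the alignments backwards is what converts every LCE query into an $O(1)$ $\LCSuf$ query in $T_p$, and this is the single technical point whose correctness really needs to be checked.
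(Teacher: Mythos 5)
Your proposal is correct and follows essentially the same route as the paper: for each trie node, run the Landau--Vishkin $\mathcal{O}(k^2)$ procedure with the roles of prefixes and suffixes swapped so that every free extension becomes a constant-time $\LCSuf$ query in $T_p$ (the paper phrases this as ``pretend the text and the pattern are reversed,'' which is exactly your backwards wavefront anchored at the shared endpoint). The extra detail you supply about the $R[e,d]$ table and the read-off over $\ell\in[m-k,m+k]$ is a faithful elaboration of that same argument.
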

\begin{proof}
We use the same method as in the proof of Lemma~\ref{lemma:internal_matches}, but to verify a match we apply the Landau-Vishkin algorithm~\cite{Landau}, which computes the edit distance using a clever dynamic programming in $\mathcal{O}(k^{2})$ time (see Section 5 of~\cite{ColeHariharan}), assuming we can answer any $\LCSuf$ query in constant time (in the original formulation, $\LCPref$ queries are used, but we can simply pretend that the text and the pattern are reversed). More precisely, given two strings, it can be used to compute all prefixes of the former whose edit distance to the latter is at most $k$. This gives us, for each node of $T$, the chunks ending there and corresponding to a match in $\mathcal{O}(nk^{2}+m)$ time. This can be modified to report all occurrences in a straightforward manner. 
\qed
\end{proof}

The situation with crossing matches is much more complicated. In this case the pattern crosses at least one boundary between two codewords, and it may cross a lot of them, which seems hard to deal with. Anyway, it suffices to iterate over all $n-1$ boundaries and for each of them find all matches that cross it. After fixing such a boundary, we may concentrate only on a window of length $2m$ containing $m$ characters to the left and $m$ to the right. Problems arise when there are many very short codewords in some fragment of the text, because in such a case all boundaries in this fragment will create windows containing lots of codewords. This is one of the obstacles we need to tackle to construct an efficient algorithm.

We want to make now one technical assumption, which simplifies significantly some definitions and the description of the algorithm. Namely, we will assume that each letter appearing in text, appears also in the pattern. Our algorithms work in the general case after minor modifications.

The notion of a pc-string will play the main role in the rest of the paper. Note that the definition changes slightly when we want to move from mismatches to errors. Nevertheless, the change is very small, so we prefer to have just one common definition, and keep in mind that its meaning depends on the variant.

\begin{definition}
Let $p$ be a pattern and $f$ be a string. We say that $f=v_1v_2...v_l$ is a pattern-compressed-string, in short pc-string, if: 
\begin{enumerate}
\item $|f|\leq 2m$ ($|f| \leq 2m+2k$ when we are dealing with errors) and $l\leq 4k+5$,
\item $v_i$ is a factor of $p$, for $i=1,2,\dots,l$,
\item $v_iv_{i+1}$ is not a factor of $p$, for $i=1,2,\dots,l-1$.
\end{enumerate}
We represent such string as a list $(a_1,b_1),(a_2,b_2),...,(a_l,b_l)$, where $v_i=p[a_i..b_i]$.
\end{definition}

Pc-strings are very convenient to deal with. The fact that no $v_iv_{i+1}$ appears in $p$ as a substring allows us to answer LCPref and LCSuf query between a subword of $f$ and a subword of the pattern in constant time, as each result of such a query overlaps at most 3 $v_i$'s, so we need at most 3 queries between factors of $p$. This also implies the
following proposition.


\begin{proposition}\label{proposition:verify_match_mismatches}
Given a position in a pc-string $f$, we can verify whether the alignment of the pattern at this position results in a match in $\mathcal{O}(k)$ time.
\end{proposition}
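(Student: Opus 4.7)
The plan is to apply the standard kangaroo-jumping technique (in the spirit of Lemma~\ref{lemma:internal_matches}), adapted to the pc-string $f$. Let $i_{0}$ denote the given alignment position; I want to count mismatches between $p[1..m]$ and $f[i_{0}..i_{0}+m-1]$ and decide whether the total is at most $k$.

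First I would maintain a pointer $j$ into the pattern (initially $j=0$) together with a mismatch counter $c$ (initially $0$). At each step I compute the longest common prefix $\ell$ of $p[j+1..m]$ and the suffix of $f$ starting at position $i_{0}+j$. If $j+\ell \ge m$ I return success with $c$ mismatches; otherwise a mismatch occurs at the next position, so I set $j\leftarrow j+\ell+1$ and $c\leftarrow c+1$ and repeat. As soon as $c>k$ I reject. The procedure terminates after at most $k+1$ such LCPref queries, so it suffices to argue that each query costs $\mathcal{O}(1)$.

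The main obstacle is implementing one LCPref query between a subword of $f$ and a suffix of the pattern in constant time. I would first locate, from the list representation $(a_{1},b_{1}),\ldots,(a_{l},b_{l})$ of $f$, the chunk $v_{t}$ that contains position $i_{0}+j$; the remaining suffix of $v_{t}$ from that position is itself a factor of $p$, and likewise for $v_{t+1},v_{t+2},\ldots$. The LCPref may extend across the boundaries between these chunks, but it cannot overlap four or more consecutive $v_{s}$'s: any such overlap would contain $v_{t+1}v_{t+2}$ in full, forcing it to be a factor of $p$ and contradicting the defining property of pc-strings. Hence I can piece the answer together with at most three constant-time LCPref queries between factors of $p$ provided by Lemma~\ref{lemma:lcapattern}: query the remaining suffix of $v_{t}$; if it is consumed entirely, move on to $v_{t+1}$ and, if necessary, to $v_{t+2}$, stopping as soon as a query returns strictly less than the length of the current chunk (or all three have been exhausted). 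Summing $\mathcal{O}(1)$ work over the $\mathcal{O}(k)$ jumps yields the claimed $\mathcal{O}(k)$ bound.
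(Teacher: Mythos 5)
Your proof is correct and follows essentially the same route as the paper: kangaroo jumping with $\mathcal{O}(k)$ LCPref queries, each answered in $\mathcal{O}(1)$ by observing that the common prefix can overlap at most three consecutive $v_i$'s (else some $v_{t+1}v_{t+2}$ would be a factor of $p$). The one detail the paper makes explicit, and which your chunk-by-chunk query processing implicitly supplies, is that the current chunk index must be maintained incrementally as the computation moves rightward rather than located from scratch at each step.
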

\begin{proof}
We already know that performing $\LCPref$ queries takes constant time. Now the result follows, again, by using constant time $\LCPref$ queries to jump over whole fragments with no mismatches. Now there is one additional detail, though. To answer such query in constant time, we need to maintain the corresponding current position in $f$, or more precisely, the current $v_{i}$ and the current letter there. This is easily done, as during the computation we only move to the right, and only to either $v_{i+1}$ or $v_{i+2}$, hence the current position can be updated in constant time.
\qed
\end{proof}

It turns out that finding matches crossing a fixed boundary can be reduced to one instance of pattern matching with mismatches or errors in a pc-string. This reduction is presented with details in the next section. Here we conclude it by the following theorem:

\begin{theorem}\label{theorem:reduce_to_PCstrings}
Suppose we have an algorithm solving pattern matching with $k$ mismatches (errors) in pc-strings in $T_{PC}(m)$ time. Then we can solve pattern matching with $k$ mismatches (errors) in LZW-compressed text in $\mathcal{O}(nk\log^2 m +m+n\cdot T_{PC}(m))$ ($\mathcal{O}(nk^2+nk\log^2 m +m+n\cdot T_{PC}(m))$) time.
\end{theorem}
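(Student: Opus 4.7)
My plan is to split occurrences into the two categories already singled out above: internal matches, where the pattern lies entirely inside a single codeword, and crossing matches, where the pattern crosses at least one codeword boundary. For internal matches I directly invoke Lemma~\ref{lemma:internal_matches} in the mismatches case, obtaining $\mathcal{O}(nk+m)$ time, or Lemma~\ref{lemma:internal_matches_errors} in the errors case, obtaining $\mathcal{O}(nk^{2}+m)$ time. The latter is exactly the source of the $nk^{2}$ term in the errors bound, while in the mismatches bound the corresponding $nk$ is absorbed by $nk\log^{2}m$.

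For crossing matches I loop over each of the $n-1$ inter-codeword boundaries. After fixing a boundary I isolate the window consisting of the $m$ (respectively $m+k$ for errors) text characters on each side of it; any crossing match at this boundary is a match lying inside the window. The goal is then to turn the window into a pc-string and feed it to the assumed black-box algorithm. The detailed construction of the pc-string, including the proof that at most $4k+5$ factors always suffice, is deferred to Section~\ref{section:reduction}; for the proof of the theorem I only need to know that the conversion costs $\mathcal{O}(k\log^{2}m)$ time per boundary and that the black box then spends $T_{PC}(m)$ on it. Summing over all $n-1$ boundaries contributes $\mathcal{O}(nk\log^{2}m + n\cdot T_{PC}(m))$, and adding the internal-match cost together with the $\mathcal{O}(n+m)$ preprocessing from Section~\ref{section:preliminaries} yields exactly the two bounds in the statement.

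The main obstacle is hidden inside the $\mathcal{O}(k\log^{2}m)$ per-boundary budget for pc-string construction. The window may span arbitrarily many short codewords, yet we must exhibit a decomposition into $\mathcal{O}(k)$ factors of $p$ with no consecutive pair $v_{i}v_{i+1}$ itself occurring as a factor of $p$; the small block count has to come from the structure of $k$-approximate alignments at the boundary, while the time budget forces a greedy left-to-right construction in which each of the $\mathcal{O}(k)$ blocks is located via a binary search on its length, every probe invoking the $\mathcal{O}(\log m)$-time LCPref primitive of Lemma~\ref{lemma:LCPref}. This is precisely where the two factors of $\log m$ come from, and everything else in the theorem is bookkeeping around these two ingredients.
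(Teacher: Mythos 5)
Your high-level split is the same as the paper's: internal matches via Lemmas~\ref{lemma:internal_matches} and~\ref{lemma:internal_matches_errors} (which indeed account for the $nk$ and $nk^2$ terms), crossing matches by producing one pc-string per codeword boundary and feeding it to the black box, and the bookkeeping of the resulting terms is correct. The gap is in the one step you yourself flag as ``the main obstacle'' and then dispose of too quickly: the claim that each boundary's window can be turned into a pc-string by a greedy left-to-right scan costing $\mathcal{O}(\log^{2}m)$ per block. The primitives you invoke do not support this. Lemma~\ref{lemma:longfactor} and the $\LCPref$ queries of Lemma~\ref{lemma:LCPref} apply only to \emph{chunks}, i.e.\ substrings of a single codeword (a root-to-leaf subword of $T$). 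A greedy block in your construction will in general span many codewords of the window, and then a single lexicographic comparison against a suffix of $p$ already costs time proportional to the number of codewords crossed, not $\mathcal{O}(\log m)$; the binary search you describe is therefore not available. Separately, the bound of $\mathcal{O}(k)$ blocks is not something you can defer to ``the structure of $k$-approximate alignments'': the algorithm must terminate with a valid pc-string (or a certificate that the unprocessed part is unreachable by any crossing match) even when no match exists, and a $2m$-window can genuinely require $\Theta(m)$ maximal pattern factors.

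The paper resolves both issues with the PH-decomposition of Section~\ref{section:reduction}, computed once globally rather than per boundary. Each codeword is decomposed separately (so the chunk primitives apply), yielding either at most $4k+7$ pattern factors or $2k+4$ factors on each side of a central ``hole''; condition~4 of the definition guarantees that a hole sits at least $2k+3$ $P$-blocks away from any codeword boundary, and a pigeonhole argument (with $\leq k$ mismatches among $2k+2$ consecutive $P$-blocks, two consecutive ones must match exactly, contradicting the non-concatenability condition) shows crossing matches never reach a hole. The per-codeword decompositions are then glued, and violations of the non-concatenability condition at the seams are repaired by merging two adjacent blocks whose concatenation is a factor of $p$ --- a test that costs only $\mathcal{O}(\log m)$ precisely because both blocks are already located as factors of $p$, which is how the paper sidesteps the multi-codeword block problem you run into. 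The total is $\mathcal{O}(nk)$ blocks at $\mathcal{O}(\log^{2}m)$ each, and each boundary's pc-string is then simply read off as the at most $4k+5$ $P$-blocks around it. To repair your argument you would need to supply both the truncation-after-$\mathcal{O}(k)$-blocks justification and a way to build blocks that cross codeword boundaries within the stated budget; the latter is exactly what the global decompose-then-merge strategy provides and a purely per-boundary greedy scan does not.
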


In order to solve pattern matching in a pc-string, we will extensively use a certain preprocessing of the pattern. This preprocessing takes $\mathcal{O}(m)$ and is performed just once in the whole algorithm, not every time we get an instance of pattern matching in a pc-string. Hence we usually do not 
include this time in the statements of the lemmas.

\section{Reducing to pc-strings}
\label{section:reduction}

In this section we show a method for reducing the problem of finding matches crossing some boundary to pattern matching with mismatches in a pc-string. We will focus on the version with mismatches, as the version with errors requires just very minor modifications.

The main technical tool we are going to use in this section is a PH-decomposition.

\begin{definition}
Given a pattern $p$ and a compressed text $t=z_1z_2...z_n$ we say that $t=w_1w_1...w_d$ is a $PH$-decomposition if the following conditions are met.
\begin{enumerate}
\item{Each block $w_i$ is either a factor of the pattern (we say it is of type $P$) or is fully contained within a single codeword (we say it is a ``hole" or of type $H$).}
\item{Each codeword contains at most one hole.}
\item{For each two consecutive blocks $w_i$, $w_{i+1}$ of type $P$, their concatenation $w_iw_{i+1}$ does not appear as a subword in the pattern.}
\item{If $w_i$ is a hole, then the whole $w_{i-(2k+3)}w_{i-(2k+2)}\ldots w_i\ldots w_{i+(2k+3)}$ lies within a single codeword and all $w_{i-(2k+3)}, \ldots, w_{i-1}, w_{i+1}, \ldots, w_{i+(2k+3)}$ are of type $P$. }
\end{enumerate}
\end{definition}
Figure~\ref{figure:PH-decomp} demonstrates an important property of a PH-decomposition. Between a hole and the boundaries of the block it originates from there are always at least $2k+3$ full P-blocks.

\begin{figure}[t]
\includegraphics[width=\textwidth]{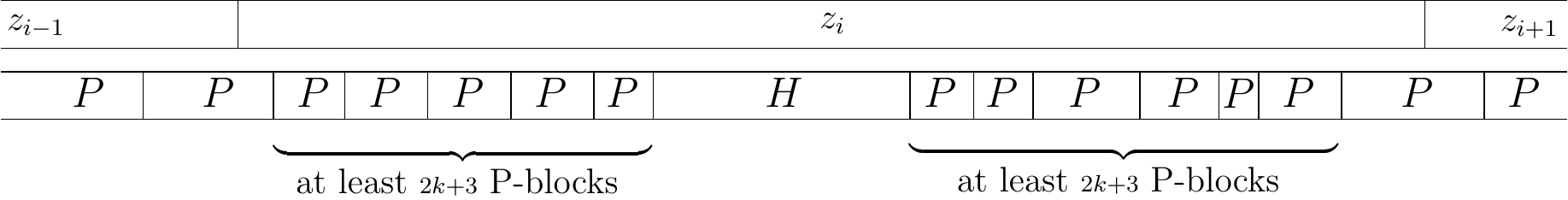}
\caption{An important property of a PH-decomposition.}
\label{figure:PH-decomp}
\end{figure}

We will now give an algorithm for finding a $PH$-decomposition of the text. It is useful to first show the following lemma.

\begin{lemma}\label{lemma:longfactor}
Given a chunk $s$ in $T$ we can:
\begin{enumerate}
\item{find the longest prefix of $s$ which appears as a subword in $p$ (and locate this subword in p) in $\mathcal{O}(\log^2 m)$} time,
\item{find the longest suffix of $s$ which appears as a subword in $p$ (and locate this subword in p) in $\mathcal{O}(\log m)$} time.
\end{enumerate}
\end{lemma}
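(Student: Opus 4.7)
The plan is to reduce both parts to a single binary search in a suffix array, using the classical fact that among all suffixes of a string, the one(s) with the longest common prefix with a query $q$ are precisely the immediate neighbors of $q$ in the lexicographic order of those suffixes. Equivalently, the longest prefix of $s$ that occurs as a substring of $p$ equals $\max_j \LCPref(s, p[j..m])$, and this maximum is realized at the rank of $s$ (or its two neighbors) in the suffix array of $p$; analogously for the suffix version with $\LCSuf$.

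For part~(1) I would build the suffix array of $p$ and binary search for the position where $s$ would fit among the suffixes of $p$. Each comparison against a suffix $p[j..m]$ amounts to one $\LCPref$ query between the chunk $s$ in $T$ and a subword of $p$, which costs $O(\log m)$ by Lemma~\ref{lemma:LCPref}, plus an $O(1)$ character comparison at the mismatch, where individual characters of $s$ are accessed via Lemma~\ref{lemma:lettertext}. With $O(\log m)$ such comparisons the binary search runs in $O(\log^2 m)$ time. The desired length is then $\max(\LCPref(s, p[SA[r{-}1]..m]),\, \LCPref(s, p[SA[r]..m]))$ at the two neighbors of the computed rank $r$, and the corresponding suffix-array entry immediately locates the matching substring inside $p$.

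For part~(2) I would apply the symmetric construction to $s^R$ and $p^R$: build the suffix array of $p^R$ in $O(m)$ time and binary search for the rank of $s^R$ among the suffixes of $p^R$. The key speedup comes from the identity $\LCPref(s^R, p^R[j..m]) = \LCSuf(s, p[1..m-j+1])$: this is a query between the chunk $s$ and a prefix of $p$, and since every prefix of $p$ is itself a root-chunk of $T_p$, it falls under Lemma~\ref{lemma:LCSuf} and is answered in $O(1)$, with the tiebreak character lookup also $O(1)$. Thus each comparison is constant time and the binary search finishes in $O(\log m)$; the two neighboring suffix-array entries then yield both the maximal length and its location in $p$.

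The only delicate point I expect is that $s$ is not stored explicitly but only implicitly as a $(\text{node},\text{length})$ pair in $T$, so every step of the binary search must be routed through the chunk-aware primitives. Given character access from Lemma~\ref{lemma:lettertext}, $\LCPref$ against pattern subwords from Lemma~\ref{lemma:LCPref}, and $\LCSuf$ against pattern prefixes from Lemma~\ref{lemma:LCSuf}, the suffix-array binary search proceeds essentially as in the uncompressed case, and the $O(\log m)$ cost of $\LCPref$ versus the $O(1)$ cost of $\LCSuf$ is exactly what separates the $O(\log^2 m)$ bound in part~(1) from the $O(\log m)$ bound in part~(2).
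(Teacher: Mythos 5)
Your proposal is correct and follows essentially the same route as the paper: a binary search over the suffix array of $p$ with $O(\log m)$-time $\LCPref$ comparisons for part~(1), and the symmetric search over reversed suffixes using constant-time $\LCSuf$ queries for part~(2). In fact, you spell out the reduction for part~(2) (the identity relating $\LCPref$ of reversals to $\LCSuf$ of prefixes) in more detail than the paper's one-line justification.
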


\begin{proof}
Let us show the first part. We are able to compare lexicographically $s$ with a suffix of the pattern in $\mathcal{O}(\log m)$ time, since we can use one $\LCPref$ query and compare the next letters. Imagine the suffix array of $p$, and suppose we want to insert $s$ into this array maintaining the lexicographical order. We can find the right place for $s$ in $\mathcal{O}(\log^2 m)$ time by binary search. Then the immediate predecessor or successor of $s$ will have the maximum possible $\LCPref$ with $s$, and this will correspond to the maximal prefix of $s$ which is a subword of $p$. 

The second part of the lemma holds because we can answer $\LCSuf$ queries in constant time.
\qed
\end{proof}

\begin{lemma}
A $PH$-decomposition of the text can be found in $\mathcal{O}(nk\log^2m+m)$ time.
\end{lemma}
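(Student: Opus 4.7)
The plan is to build the decomposition by a left-to-right sweep over the codewords, using the two parts of Lemma~\ref{lemma:longfactor} as primitives for cutting out P-blocks greedily. I would maintain a pointer $q$ positioned somewhere in the text; initially $q=1$. Throughout, $q$ lies inside some codeword $z_i$, and I repeatedly carve off a next P-block by applying Lemma~\ref{lemma:longfactor}(1) to the chunk that starts at $q$ and runs to the end of $z_i$, obtaining the longest prefix of this chunk which is a factor of the pattern. Greedy maximality automatically enforces condition~3 on consecutive P-blocks: if $u$ is the greedy choice and $v$ the following one, then $uv$ begins with $uc$ for the character $c$ immediately after $u$ in the text, and $uc$ is not a factor of $p$ by definition of $u$.

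For each codeword $z_i$, I cap the number of left-greedy extractions that stay inside $z_i$ at $2k+3$. If $z_i$ is exhausted (or $q$ advances into $z_{i+1}$) before this budget is used, nothing more is done for $z_i$ and no hole is created. If, after $2k+3$ in-codeword P-blocks, we are still strictly inside $z_i$, I switch to the right end and apply Lemma~\ref{lemma:longfactor}(2) iteratively, peeling off up to $2k+3$ longest-suffix-factor P-blocks from the right of $z_i$; by symmetric maximality these satisfy condition~3 among themselves. Whenever the left and right sequences remain disjoint, the uncovered middle of $z_i$ is declared a hole and all $4k+6$ surrounding P-blocks lie inside $z_i$, giving condition~4. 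In the remaining case the two sides meet or overlap, so $z_i$ is fully covered by at most $4k+6$ P-blocks and no hole is needed.

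Conditions~1 and~2 are immediate from the construction: each block is either a pattern factor or a within-codeword hole, and at most one hole per codeword is ever created. Each codeword triggers $\mathcal{O}(k)$ calls to Lemma~\ref{lemma:longfactor}, each costing $\mathcal{O}(\log^2 m)$, for a total of $\mathcal{O}(nk\log^2 m)$; the additive $m$ absorbs the one-time preprocessing required by that lemma together with the text-position machinery from Lemmas~\ref{lemma:lettertext} and~\ref{lemma:LCPref}. The delicate point, which I expect to be the main obstacle, is the non-hole case when the left- and right-greedy sequences collide inside $z_i$: the last left-greedy block and the first right-greedy block need not dovetail, and even if they do the concatenation might violate condition~3 across the junction. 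I would handle this by trimming one of the meeting blocks and re-running a few ordinary greedy steps from the trimmed boundary, arguing that $\mathcal{O}(k)$ extra extractions always suffice to close the gap cleanly while preserving all four conditions.
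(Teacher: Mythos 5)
Your construction follows the paper's strategy --- per-codeword greedy extraction of longest pattern-factor prefixes (and suffixes), with a hole in the middle of long codewords --- but it has one genuine gap: condition~3 is not enforced across codeword boundaries. Your maximality argument (``$uc$ is not a factor of $p$ by definition of $u$'') is only valid when the greedy choice was maximal with respect to the text; your chunks are cut off at the end of $z_i$ (indeed Lemma~\ref{lemma:longfactor} applies to chunks of $T$, which cannot cross a codeword boundary), so when a P-block $u$ ends exactly at the boundary, the character $c$ that follows it lies in $z_{i+1}$, was never examined, and $uc$ may perfectly well be a factor of $p$. The paper repairs exactly this with an explicit post-processing pass: repeatedly merge consecutive P-blocks $w_iw_{i+1}$ whose concatenation occurs in $p$, testing each candidate by an $\mathcal{O}(\log m)$ binary search in the suffix array. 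Once that pass is added, your guard of $2k+3$ P-blocks per side of a hole is one too few: the boundary-adjacent block may get merged with a block of the neighbouring codeword and then no longer lies within a single codeword, leaving only $2k+2$ in-codeword blocks and breaking condition~4; this is why the paper extracts $2k+4$ blocks on each side.

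A secondary issue is your threshold. By switching to right-greedy extraction after only $2k+3$ left blocks, the collision you flag as ``the main obstacle'' becomes the typical case rather than a corner case (e.g.\ if the full left-greedy decomposition of $z_i$ has $2k+4$ factors, the $2k+3$ right blocks reach deep into the left ones). The paper sidesteps this by declaring a hole only when the left greedy fails to exhaust the codeword within $4k+7$ steps; since greedy prefix decomposition into pattern factors uses the minimum possible number of factors, the region beyond the first $2k+4$ left blocks then cannot be covered by the $2k+4$ right blocks, so the two sides stay essentially disjoint. If you keep your threshold, the clean fix in the collision case is to discard the right-greedy blocks and run the left greedy to completion --- greedy minimality guarantees it terminates within $\mathcal{O}(k)$ further steps --- rather than the trimming-and-patching you sketch, which does not obviously preserve condition~3 at the junction.
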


\begin{proof}
The idea is to first work with single codewords and decompose them partially, then take all these decompositions and merge them together into one $PH$-decomposition of the whole text.

Let us consider a codeword $z$. We want to represent $z$ in one of the following forms:
\begin{enumerate}
\item $z=v_1v_2...v_l$, where each $v_i$ is a factor of the pattern, $l\leq 4k+7$, and none of the words $v_iv_{i+1}$ appears in $p$ as a subword,
\item $z=v_1v_2...v_{2k+4}hv'_{2k+4}v'_{2k+3}...v'_{1}$ where each $v_i$ and each $v'_i$ is a factor of the pattern, and no $v_iv_{i+1}$ nor $v'_{i+1}v'_{i}$ appears in $p$ as a subword; here $h$ can be seen as a ``hole".
\end{enumerate}
We go from left to right through z. First we use Lemma~\ref{lemma:longfactor} to find $v_1$, the longest prefix of $z$ which appears in $p$ as a subword. We then erase $v_1$ from the beginning of $z$ and proceed similarly to find $v_2$, and so on. If this procedure stops after at most $4k+7$ steps, meaning there is nothing left from $z$, we are done, as we have got the first possible form of $z$. In the opposite case, we stop and do the same starting from the end of $z$ and going from right to left. We perform $2k+4$ such steps getting $v_1', v_2', ..., v_{2k+4}'$. Clearly, we can get the second form by taking the first $2k+4$ $v_i$'s, all $v_i'$'s, and setting $h$ to be the remaining middle part of $z$.

Now take all these representations and merge them together. We declare all $v_i$'s and $v_i'$'s to be of type $P$ and all $h$'s to be of type $H$. Note that we are almost done, as all conditions in the definition of a $PH$-decomposition except (maybe) the third are fulfilled. The violation of the third rule can occur in places where the representations were glued together. To fix this we repeat the following procedure until the third condition holds: take two consecutive blocks $w_i$ and $w_{i+1}$ of type $P$ such that $w_iw_{i+1}$ appears as a subword in $p$, replace $w_i, w_{i+1}$ by one $P$-block $w_iw_{i+1}$. In order to do this, we need a way to check whether a word of the form $w_iw_{i+1}$ appears as a subword in $p$. This can be done by locating the position of $w_{i}w_{i+1}$ in the suffix array of $p$ using binary search in $\mathcal{O}(\log m)$ time, as in the proof of Lemma~\ref{lemma:longfactor}.

In total we produce at most $\mathcal{O}(nk)$ blocks, and spend $\mathcal{O}(\log^2m)$ time per block. Thus the claimed time bound.
\qed
\end{proof}

Suppose now we have a $PH$-decomposition of the text $t=w_1w_1...w_d$. Thanks to this, we can simplify the problem of finding crossing matches. Fix a boundary between two codewords and suppose it is either inside or just before a $P$-block $w_i$. From the definition, all blocks $w_{i-(2k+2)},w_{i-(2k+1)}, \ldots, w_{i+(2k+2)}$ are of type $P$.  Furthermore, we claim that every match crossing our fixed boundary lies within the fragment $w_{i-(2k+2)}w_{i-(2k+1)}...w_{i+(2k+2)}$ of the text. This is because otherwise it either ends after $w_{i+(2k+2)}$ or starts before $w_{i-(2k+2)}$, so there are $2k+2$ consecutive $P$-blocks such that their concatenation match with at most $k$ mismatches with some subword of the pattern. But then some two consecutive blocks have to match exactly with some subword of the pattern, which contradicts the definition of the $PH$-decomposition.

Now a string $s$ consisting of the at most $4k+5$ $P$-blocks considered above is almost a pc-string. We only need to trim it so that its length does not exceed $2m$ ($m$ to the left and $m$ to the right from boundary). After such trimming the first or the last block might become shorter, and it might be necessary to merge such incomplete block with the neighbour, which can be done in $\mathcal{O}(\log m)$ time as in the above proof. This concludes our reduction.

If we just want to find the first occurrence of the pattern, then we process the boundaries one by one and solve matching in the corresponding (at most) $2m$-length windows of the text. However, if our aim is to report all the occurrences of the pattern, we need to make sure no match is reported multiple times. For this just trim the considered windows so that they overlap on at most $m-1$ positions. Note also that when performing the reduction carefully we only need to store $\mathcal{O}(k)$ pattern factors at a time, so it won't affect the space complexity.

\section{Detecting matches in pc-strings}
\label{section:weakly}

In this section we concentrate on the version with mismatches and present an efficient algorithm for detecting matches in a pc-string (recall that by Theorem~\ref{theorem:reduce_to_PCstrings} this solves the problem of finding matches in a compressed text).
We distinguish two cases depending on the ``level of periodicity" of the pattern. Let $z\geq 3$ be a parameter to be fixed later. We find in $p$ as many disjoint $z$-breaks as possible. If there are just a few such breaks, the pattern can be seen as {\it highly periodic}.

For finding the maximum number of disjoint breaks in the pattern, we use the \proc{Find-breaks} procedure of Cole and Hariharan~\cite{ColeHariharan}. We prove its correctness for the sake of completeness.

\begin{algorithm}[t]
\caption{\proc{Find-breaks}(p)}
\begin{algorithmic}[1]
\State $i\gets 1$
\While{$i\leq |p|-z+1$}
\If{the period of $p[i..i+z-1]$ exceeds $\frac{z}{2}$}
\State report $z$-break $p[i..i+z-1]$
\State $i \gets i+z$
\Else
\State $i\gets i+1$
\EndIf
\EndWhile
\end{algorithmic}
\end{algorithm}

\begin{lemma}\label{lemma:finding_breaks}
\proc{Find-breaks}$\mathrm{(p)}$ finds the maximum possible number of disjoint $z$-breaks in $p$. Moreover, it can be implemented in $\mathcal{O}(m)$ time.
\end{lemma}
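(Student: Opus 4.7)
My plan splits into the correctness of the greedy rule and its linear-time realization.

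For correctness, I would run a standard ``greedy stays ahead'' argument. Let $s_1<s_2<\ldots<s_g$ be the starting positions of the $z$-breaks reported by \proc{Find-breaks}, and let $t_1<t_2<\ldots<t_b$ be the starting positions of any maximum family of pairwise disjoint $z$-breaks in $p$. I claim $s_j\leq t_j$ for every $j\leq\min(g,b)$, which immediately forces $g\geq b$. For $j=1$, the algorithm incremented $i$ by one at every step before reaching $s_1$; each such step certifies that $p[i..i+z-1]$ is periodic and hence not a $z$-break, so no $z$-break can start before $s_1$ and in particular $t_1\geq s_1$. For the inductive step, after reporting the $(j{-}1)$-th break the algorithm resumes at position $s_{j-1}+z$, and disjointness of the optimal family combined with the inductive hypothesis give $t_j\geq t_{j-1}+z\geq s_{j-1}+z$, so the optimal break $B_j$ still lies in the portion scanned by the algorithm after the jump; the algorithm will therefore report some $z$-break at a position no later than $t_j$, yielding $s_j\leq t_j$.

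For the linear-time implementation, the naive cost of deciding whether $p[i..i+z-1]$ is a $z$-break is $\Theta(z)$, and summed over $\Theta(m)$ positions this would give $\Theta(mz)$. I would avoid this by building, as a one-time preprocessing in $O(m)$, the suffix array of $p$ together with the constant-time LCP oracle provided by Lemma~\ref{lemma:lcapattern}. The scan then maintains a candidate short period $\pi\leq z/2$ of the current window; a single test of the form $\LCP(i,i+\pi)\geq z-\pi$ either certifies that $\pi$ is still a period of $p[i..i+z-1]$ or detects that the candidate has been invalidated. In the latter case the periodicity lemma constrains any valid short period of the shifted window to be compatible with $\pi$ modulo its gcd, so $O(1)$ further LCP queries suffice to locate a new $\pi$ or to conclude that none exists and the window is a genuine $z$-break. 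Whenever a break is reported, the algorithm jumps by $z$ and the state is reset; this cost is charged to the reported break. In total the scan performs $O(m)$ LCP queries and runs in $O(m)$ time.

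The main obstacle is the amortized maintenance of the candidate period in the second part: the greedy analysis itself is routine, but avoiding a $\Theta(z)$ recomputation at each single-position shift requires the rigid structure dictated by the periodicity lemma, which is precisely what rules out unrelated short periods coexisting on a sufficiently long overlap and keeps the per-shift update in amortized constant time.
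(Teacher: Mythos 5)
Your proposal is correct and, despite the different packaging, follows essentially the paper's route: the greedy-stays-ahead induction is the standard counterpart of the paper's exchange argument (the leftmost $z$-break can always be taken as the first break of an optimal family), and the one-cheap-test-per-shift scan justified by the periodicity lemma is the paper's amortization in heavier clothing. Indeed, given your invariant that $\pi$ is a period of the previous window, the query $\LCP(i,i+\pi)\geq z-\pi$ decides exactly whether $p[i+z-1]=p[i+z-1-\pi]$, which is the single character comparison the paper performs per shift; so the suffix array and LCP oracle are not actually needed for this lemma. Two spots deserve tightening. First, $s_j\leq t_j$ for $j\leq\min(g,b)$ does not by itself force $g\geq b$: you also need that if $g<b$, the optimal break starting at $t_{g+1}\geq s_g+z$ lies in the region the algorithm still scans after its last report, so it would have reported at least one more break. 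Second, and more substantively, the claim that ``$O(1)$ further LCP queries suffice to locate a new $\pi$'' is asserted rather than proved, and it is not clear which queries you would issue among the up to $z/2$ candidate periods. The clean resolution---and the paper's key observation---is that no replacement period can exist: if the shifted window had a period $\pi'\leq z/2$ with $\pi'\neq\pi$, then the length-$(z-1)$ overlap of the two windows has both $\pi$ and $\pi'$ as periods with $\pi+\pi'\leq z-1$, so $\gcd(\pi,\pi')$ is a period of the overlap; since $\pi'$ is a period of the whole shifted window, $p[i+z-1]=p[i+z-1-\pi']=p[i+z-1-\pi]$, contradicting the invalidation of $\pi$. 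Hence invalidation already certifies that the window is a break, the scan never searches for a new period mid-stretch, and the only $\Theta(z)$ recomputations are the resets after a reported break, which you correctly charge to the (disjoint, length-$z$) breaks.
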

\begin{proof}
To show the correctness of the procedure, we proceed by induction on $|p|$.  \proc{Find-breaks} locates the leftmost $z$-break, and then continues on 
its right, hence it is enough to argue that it is always possible to choose the maximum number of disjoint $z$-breaks with the first break being the
leftmost break. But this is obvious, because we can always move the first break in the solution to the left.

Let us now concentrate on the time complexity. After the algorithm terminates, the pattern is of the form $p=s_1b_1s_2b_2...b_{r}s_{r+1}$, where each $b_i$ is a break and each $s_i$ is $q$-periodic for some $q\leq \frac{z}{2}$.  We will show how to handle a portion $w_i=s_ib_i$ in $\mathcal{O}(|s_ib_i|)$ time. We start with the window containing the length-$z$ prefix of $w_i$, we find its period $q$ in $\mathcal{O}(z)$ time. If $q>\frac{z}{2}$ then stop. Otherwise we keep adding letters to the end, one by one, checking only whether the new character is the same as the one $q$ positions earlier. First time this is not the case, we stop and mark the length-$z$ suffix of the considered portion as a break $b$. This works because if $b$ had period $q'\leq \frac{z}{2}$ (necessarily $q\neq q'$) then, by the periodicity lemma, $b$ with last character erased would have period at most $\gcd(q,q')<q$, which is impossible.
\qed
\end{proof}

We consider now the case when $p$ contains at least $2k$ disjoint $z$-breaks. It turns out that in such a case we can discard most of the starting positions, and then verify all remaining candidates separately.

\begin{lemma}\label{lemma:sparsify_matches}
Let $f$ be a text of length $2m$. Assume that the pattern $p$ contains at least $2k$ disjoint $z$-breaks. Then there are at most $\mathcal{O}(\frac{m}{z})$ matches (with $k$ mismatches) of $p$ in $f$.
\end{lemma}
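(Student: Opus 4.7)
The plan is a double-counting argument that plays the $2k$ disjoint breaks against the geometric fact that a $z$-break cannot occur too densely. First I would record that simple geometric fact: if $b$ is a $z$-break, so its smallest period exceeds $z/2$, then two occurrences of $b$ in $f$ starting at positions $j_1<j_2$ must satisfy $j_2-j_1>z/2$; otherwise $j_2-j_1$ would itself be a period of $b$, contradicting minimality. Since $|f|=2m$, each break therefore occurs at most $\mathcal{O}(m/z)$ times in $f$.

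Next I would exploit disjointness of the breaks $b_1,\dots,b_{2k}$, sitting at positions $a_1<\dots<a_{2k}$ inside $p$. Suppose $p$ matches $f$ at position $i$ with at most $k$ mismatches. Each position of $p$ on which there is a mismatch lies in at most one break (because the breaks are disjoint), so at most $k$ of the $2k$ breaks contain a mismatch in this alignment. Consequently at least $k$ breaks $b_l$ match $f$ \emph{exactly}, yielding an honest occurrence of $b_l$ in $f$ starting at $i+a_l-1$.

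Now I would count pairs $(i,l)$ such that the match at position $i$ exactly contains break $b_l$. Summing over the $r$ matches, there are at least $rk$ such pairs. Summing instead over the $2k$ breaks, for each fixed $l$ the map $i\mapsto i+a_l-1$ is injective, so the pairs involving $b_l$ inject into the set of occurrences of $b_l$ in $f$, of which there are $\mathcal{O}(m/z)$ by the first step. This side is therefore at most $2k\cdot \mathcal{O}(m/z)=\mathcal{O}(km/z)$. Comparing the two counts gives $rk=\mathcal{O}(km/z)$, i.e.\ $r=\mathcal{O}(m/z)$, as claimed.

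No part of this is really an obstacle; the one point worth stating cleanly is the sparsity of occurrences of a single break (which uses only the periodicity lemma in its simplest form) and the disjointness-based pigeonhole giving $\geq k$ exact breaks per match. Everything else is bookkeeping.
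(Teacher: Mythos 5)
Your proof is correct and is essentially the paper's argument: the paper also observes that occurrences of a single $z$-break in $f$ are $>z/2$ apart, that at least $k$ of the $2k$ disjoint breaks must match exactly at any occurrence, and then double-counts (phrased there as placing ``marks'' at candidate starting positions rather than counting pairs $(i,l)$). No substantive difference.
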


\begin{proof}
Choose $2k$ disjoint occurrences of breaks in the pattern. Let $b_1, b_2, ..., b_r$ be all pairwise different breaks among them, with $b_i$ occurring $x_i$ times, so $\sum_{i=1}^{r}x_i=2k$. Consider one break $b_i$, and denote the positions of the disjoint occurrences of $b_i$ in $p$ by $o_1, o_2, ..., o_{x_i}$. For each occurrence of $b_i$ in the text, say at position $q$, we add a mark to all positions $q-o_{1}+1, q-o_2+1, \ldots, q-o_{x_i}+1$ within the text. Since the distance between two different occurrences of $b_i$ in the text is at least $\frac{z}{2}$ (because if they were closer it would imply $b_i$ has shorter period than $\frac{z}{2}$), there will be at most $x_i\frac{2m}{z}$ marks caused by $b_i$. So all in all, there will be at most $\sum_{i=1}^{r}x_i\frac{2m}{z}=\frac{4km}{z}$ marks. Consider now a position in the text where $p$ matches with at most $k$ mismatches. At least $k$ of the $2k$ breaks have to match exactly, so we have at least $k$ marks there. But there are only at most $\frac{4m}{z}$ positions with at least $k$ marks, so the lemma follows.
\qed
\end{proof}

This lemma is very useful, but it does not give a method to find all these $\mathcal{O}(\frac{m}{z})$ positions. For this need to locate all occurrences in $f$ of up to $2k$ pattern breaks. We cannot simply use the usual multiple pattern matching algorithm, because it would cost $\Omega(m)$ time, which is too much. However, we know that there are at most $\mathcal{O}(\frac{km}{z})$ occurrences of these breaks in $f$. This fact, combined with an efficient algorithm for multiple pattern matching in a pc-string, which is an adaptation
of the method of Gawrychowski~\cite{GawrychowskiMultipleLZW}, gives a solution.

\begin{lemma}\label{lemma:break_matching}
Suppose $p$ is a pattern and $b_1, b_2, ..., b_r$ is a fixed collection of its disjoint and pairwise different $z$-breaks. We can preprocess the pattern and the collection of breaks
in $\mathcal{O}(m)$ time, so that later given any pc-string $f=v_1v_2...v_l$ we can find all $occ$ occurrences of $b_1, b_2, ..., b_r$ in $f$ in $\mathcal{O}(l\log m+occ)$ time.
\end{lemma}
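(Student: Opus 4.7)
My plan is to classify each reported occurrence by which blocks of $f$ it touches. A break occurrence may lie entirely inside a single block $v_i$ (internal), or straddle a single boundary and span two consecutive blocks, or span three consecutive blocks with the middle one fully contained; pc-string property~(3) forbids covering any block together with an adjacent one, so a single occurrence touches at most three blocks. I would share one $\mathcal{O}(m)$ preprocessing across all pc-string queries: the suffix array of $p$ with RMQ for constant-time $\LCPref$/$\LCSuf$ queries on factors of $p$ (Lemma~\ref{lemma:lcapattern}), the sorted array $\mathcal{U}$ of break starts $u_1<\cdots<u_r$ in $p$, and the Aho--Corasick automaton $A$ of $\{b_1,\ldots,b_r\}$ augmented with, for each position $q\in\{1,\ldots,m\}$, the state $\sigma(q)$ that $A$ reaches after scanning $p[1..q]$ from the root and the list of breaks ending at $q$.

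The internal case is immediate: since $v_i=p[a_i..b_i]$ is itself a factor of $p$, the breaks fitting entirely inside $v_i$ are exactly those whose starts lie in $[a_i,\,b_i-z+1]$. A single binary search in $\mathcal{U}$ outputs them in $\mathcal{O}(\log m)$ plus the number of matches in $v_i$, summing to $\mathcal{O}(l\log m+occ_{\mathrm{int}})$.

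For the straddling case my plan is to adapt the multiple-pattern LZW engine of~\cite{GawrychowskiMultipleLZW}: scan $f$ block by block while simulating $A$, but jump over each block in $\mathcal{O}(\log m)$ time using the pattern-side information. Entering a block $v_i$ in some state $\tau$ of $A$, I would use $\LCPref$ queries on $p$ combined with ancestor queries on the suffix-link tree of $A$ to locate, in $\mathcal{O}(\log m)$ steps, the position inside $v_i$ at which $\tau$'s trajectory merges with the precomputed $\sigma$-trajectory over $[a_i,b_i]$; from the merge point onward, the matches encountered and the exit state at $b_i$ can be read off directly from $\sigma$ together with a range query on $\mathcal{U}$. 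Each straddling occurrence is thereby reported in $\mathcal{O}(1)$ and each block contributes $\mathcal{O}(\log m)$ overhead. The 3-block straddles fit into the same framework because the fully-contained middle block has length less than $z$ (otherwise, by property~(3) combined with the length bound $z$, a neighboring block would also have to be fully contained, which is excluded), so a constant number of additional $\LCPref$ checks catches any break that bridges two adjacent boundaries.

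The principal obstacle is exactly this merge step: simulating $A$ on the pattern factor $p[a_i..b_i]$ starting from an arbitrary entering state $\tau$ in $\mathcal{O}(\log m)$ rather than $\Omega(|v_i|)$ time. This is the nontrivial ingredient borrowed from the cited paper, resting on a combined walk in $A$'s suffix-link tree and in the suffix structure of $p$; once it is available, gluing it with the internal routine yields the promised $\mathcal{O}(l\log m+occ)$ query time on top of a one-time $\mathcal{O}(m)$ preprocessing.
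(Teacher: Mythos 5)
Your overall plan---run the Aho--Corasick automaton of the breaks over $f$, jump over each block in $\mathcal{O}(\log m)$ time by adapting \cite{GawrychowskiMultipleLZW}, and exploit that all breaks have the same length $z$---is the same as the paper's, but two steps do not hold up. First, the internal case is wrong as written: $\mathcal{U}$ contains only the $r$ starting positions of the \emph{designated} disjoint occurrences of $b_1,\ldots,b_r$ in $p$, whereas a block $v_i=p[a_i..b_i]$ may contain occurrences of these breaks at entirely different positions of $p$, so a range query on $\mathcal{U}$ misses them. What is needed (and what the paper precomputes) is the set of \emph{all} occurrences of all the breaks in $p$, obtained from the Aho--Corasick run and stored, e.g., as an array giving for each $j$ the leftmost occurrence of any break in $p[j..m]$; since all breaks have length $z$, one then walks from one occurrence inside $[a_i,b_i-z+1]$ to the next in $\mathcal{O}(1)$ per occurrence. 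You do compute the per-position lists of breaks ending at each $q$, but the structure you actually query is the wrong one.

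Second, and more importantly, the crossing case defers exactly the step that constitutes the proof. The algorithm of \cite{GawrychowskiMultipleLZW} is not a black box into which you can feed pattern factors: its block-jumping relies on the blocks being codewords of a preprocessed trie. To make it work when the blocks are factors of $p$, one must show that the primitive queries the framework needs---the longest suffix (prefix) of $v_i$ that is a prefix (suffix) of some break, whether $v_i$ is a subword of some break, and all occurrences of breaks inside $v_i$---can each be answered in $\mathcal{O}(\log m)$ (or $\mathcal{O}(1)$ per occurrence) after only $\mathcal{O}(m)$ preprocessing. This is the actual content of the paper's proof, which uses sorted arrays of all prefixes and suffixes of the breaks, arrays recording for each position of $p$ the longest break-prefix ending there and the longest break-suffix starting there, and a prefix tree of break prefixes supporting level-ancestor queries. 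Your ``merge the trajectory of $\tau$ with the $\sigma$-trajectory'' correctly explains why everything after the merge point is cheap (once $z$ characters of $v_i$ have been read the state no longer depends on $\tau$), but locating the merge point and reporting the occurrences that straddle the block boundary in $\mathcal{O}(\log m)$ plus $\mathcal{O}(1)$ per occurrence is precisely the unproved part, and it does not follow from the citation alone.
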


\begin{proof}
We assume that we have a sorted array of all suffixes and all prefixes of all breaks. We also assume that all occurrences of $b_1,b_2,...,b_r$ in $p$ have been found using the Aho-Corasick automaton~\cite{AhoCorasick}, and that we have an array storing for each $j$ the leftmost occurrence of any break from the collection in $p[j..m]$.
As a byproduct of generating all occurrences, we also get an array storing for each $j$ the longest prefix of $p[j..m]$ which is a prefix of some break.
By reversing the pattern and running the automaton again, we can also get an array storing for each $j$ the longest suffix of $p[1..j]$ which is a suffix of some break. Additionally, we organize the prefixes of all breaks into a {\it prefix tree}, where the parent of $b_{i}[1..j]$ is the longest prefix of some break which is a proper suffix of $b_{i}[1..j]$. Such tree can be constructed in linear time using a single scan over the sorted array of all reversed prefixes of all breaks. The prefix tree is augmented with a constant time level ancestor structure.

The preprocessing is done in $\mathcal{O}(m)$ time because the total length of all breaks and the total number of their occurrences is at most $m$.
Now we will see how to adapt the algorithm given in~\cite{GawrychowskiMultipleLZW} to work in our case. One can see, that it is enough to show how to perform the following operations:
\begin{enumerate}
\item given some $v_i$ find its longest suffix (prefix), which is a prefix (suffix) of some $b_j$,
\item given some $v_i$ check whether it is a subword of some $b_j$ (and locate this subword),
\item given some $v_i$ find all occurrences of $b_j$'s inside it.
\end{enumerate}
It is easy to see that the first two types of queries can be implemented in $\mathcal{O}(\log m)$. Consider the first type: first we use the previously computed array to compute the longest prefix of some break which is a suffix of $p[1..b]$, where $v_{i}=p[a..b]$. Then, either such prefix is
fully within $p[a..b]$, and we return it, or we need to compute its sufficiently short suffix which is a prefix of some break. In other word, we need
to locate its lowest ancestor in the prefix tree which corresponds to a prefix of length at most $|v_{i}|$. This can be done using binary search over all ancestors. Now consider
the second type: we binary search in the sorted array containing the suffixes of all $b_{j}$'s to find the one sharing the longest prefix with $v_{j}$. Finally, consider the third type of query. Because all breaks in our collection have the same length, we can find all their occurrences in $v_i$ using the array storing the leftmost occurrence in any
suffix of $p$ spending just $\mathcal{O}(1)$ time per occurrence. This implies that our running time is $\mathcal{O}(l \log m + occ)$.
\qed
\end{proof}

We now describe an algorithm for patterns with at least $2k$ disjoint $z$-breaks.

\begin{theorem}\label{theorem:algorithm_nonperiodic}
Suppose the pattern contains at least $2k$ disjoint $z$-breaks. Then pattern matching with $k$ mismatches in pc-strings can be solved in $\mathcal{O}(k\log m+ \frac{km}{z})$ time.
\end{theorem}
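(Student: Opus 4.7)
My plan is to chain three ingredients: Lemma~\ref{lemma:break_matching} for finding break occurrences, the counting argument behind Lemma~\ref{lemma:sparsify_matches} for sparsification, and Proposition~\ref{proposition:verify_match_mismatches} for per-candidate verification.

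As a one-time preprocessing (already paid for in the $\mathcal{O}(m)$ term), I would run \proc{Find-breaks} on $p$ to extract at least $2k$ disjoint $z$-breaks and then, exactly as in the proof of Lemma~\ref{lemma:sparsify_matches}, fix $2k$ such disjoint occurrences, letting $b_1,\ldots,b_r$ be the pairwise different breaks among them with multiplicities $x_1,\ldots,x_r$ satisfying $\sum_i x_i=2k$. I would feed the collection $\{b_1,\ldots,b_r\}$ to the preprocessing of Lemma~\ref{lemma:break_matching}.

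Given a query pc-string $f=v_1\cdots v_l$ (with $l=\mathcal{O}(k)$), I would invoke Lemma~\ref{lemma:break_matching} to enumerate all occurrences of $b_1,\ldots,b_r$ in $f$. Because any two occurrences of the same break in $f$ are at least $z/2$ apart (otherwise the period of the break would be at most $z/2$, contradicting the definition of a break), each $b_i$ contributes at most $4m/z$ occurrences, and hence the total is $occ=\mathcal{O}(km/z)$. This phase therefore runs in $\mathcal{O}(l\log m+occ)=\mathcal{O}(k\log m+km/z)$ time. Next, I would replay the marking argument of Lemma~\ref{lemma:sparsify_matches} on $f$: for every enumerated occurrence of $b_i$ at position $q$ and every in-pattern offset $o$ of $b_i$, add a mark at position $q-o+1$ of $f$. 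The total number of marks is $\sum_i x_i\cdot 4m/z=\mathcal{O}(km/z)$. After tallying multiplicities, I would keep only the positions with at least $k$ marks; Lemma~\ref{lemma:sparsify_matches} guarantees there are $\mathcal{O}(m/z)$ such candidates.

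Finally, for each surviving candidate I would apply Proposition~\ref{proposition:verify_match_mismatches} to decide in $\mathcal{O}(k)$ time whether $p$ matches with at most $k$ mismatches, giving $\mathcal{O}(m/z)\cdot\mathcal{O}(k)=\mathcal{O}(km/z)$ in total. Summing yields the advertised $\mathcal{O}(k\log m+km/z)$ bound. The step I expect to be the main obstacle is converting the pile of $\mathcal{O}(km/z)$ marks into the list of $\geq k$-hit positions without paying an extra logarithmic factor: a naive heap-based merge of the $\mathcal{O}(k)$ sorted per-(break, offset) streams would incur an $\mathcal{O}(\log k)$ overhead that leaks into the final bound. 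I would avoid this by exploiting that the mark positions are integers in $[1,2m]$ and there are only $\mathcal{O}(km/z)$ of them, using word-RAM integer sorting (or, equivalently, a hash table with expected-constant operations) to bucket and count them in $\mathcal{O}(km/z)$ time. Everything else is immediate from the already-established machinery.
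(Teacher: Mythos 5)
Your proposal is correct and follows essentially the same route as the paper: enumerate the break occurrences via Lemma~\ref{lemma:break_matching}, replay the marking argument of Lemma~\ref{lemma:sparsify_matches} to isolate $\mathcal{O}(m/z)$ candidate alignments, and verify each in $\mathcal{O}(k)$ time. The only divergence is the mark-tallying detail you flag: rather than integer sorting or hashing, the paper keeps a global array of size $\mathcal{O}(m)$ that is initialized once at the very start of the whole algorithm and, in each invocation, records the modified entries so that only those are cleaned up afterwards in $\mathcal{O}(km/z)$ time --- a deterministic alternative to your expected-constant-time hash table.
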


\begin{proof}
First we find $2k$ disjoint $z$-breaks in the pattern. We want now to detect the at most $\mathcal{O}(\frac{m}{z})$ positions in $f$ where p can potentially match. Proceeding as in the proof of Lemma~\ref{lemma:sparsify_matches}, first choose some $2k$ disjoint $z$-breaks and find all their matches in $f$ using the algorithm from Lemma~\ref{lemma:break_matching}. This costs us $\mathcal{O}(l\log m +occ)=\mathcal{O}(k\log m +\frac{km}{z})$ time. The marking phase can be done in $\mathcal{O}(\frac{km}{z})$ time. Now for each of the $\mathcal{O}(\frac{m}{z})$ positions verify whether $p$ matches there in $\mathcal{O}(k)$ time. So we can find all matches of $p$ in $f$ in $\mathcal{O}(k\log m +\frac{km}{z})$ time.
\qed
\end{proof}

\begin{remark}
The marking phase uses an array of size $\mathcal{O}(m)$. The array is reused whenever we apply the above lemma, so it adds just $\mathcal{O}(m)$ to the final space complexity.
Observe that we cannot afford to initialize the whole array in every application, as $\mathcal{O}(m)$ might actually be larger than $\mathcal{O}(k\log m+\frac{km}{z})$.
We initialize it just once in the very beginning of the whole algorithm, and during each marking phase we prepare a list of modified entries in the array. Then we clean up
just the corresponding part of the array in $\mathcal{O}(\frac{km}{z})$ time.
\end{remark}

Note that taking big $z$ makes our algorithm really fast. However, the larger is $z$, the harder is for the pattern to contain many $z$-breaks. Furthermore, we cannot expect each pattern to have many $z$-breaks, even for small $z$. Therefore, we need a different algorithm for the case when $p$ has few breaks, or is {\it highly periodic}. The algorithm has to take advantage of the regular structure of the pattern.

\section{Basic algorithm for highly periodic patterns}

\label{section:highly}
In this section we assume the pattern is highly periodic. This means we can write it in the form $p=s_1b_1s_2b_2...s_rb_rs_{r+1}$, where $r<2k$, each $b_i$ is a $z$-break and each $s_i$ is a (possibly empty) string with period at most $\frac{z}{2}$. The fragments $s_1,s_2,...,s_{r+1}$ are called {\it periodic stretches}.  As in the previous section we are interested in finding all matches (with at most $k$ mismatches) of $p$ in a pc-string $f$. 

Below we describe how to reduce the general case to the one where the number of breaks in the text is small. A very similar reasoning will be also used later in matching with errors, the only change being increasing some constants.

\begin{lemma}\label{lemma:discarding_breaks}
Suppose $f$ is a string of length at most $2m$ and $p$ is a pattern containing at most $2k$ disjoint $z$-breaks. There exists a subword $f'$ of $f$ having at most $6k+1$ disjoint $z$-breaks such that each match of $p$ in $f$ lies fully within $f'$.
\end{lemma}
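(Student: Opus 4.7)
The plan is to take $f'$ as the shortest contiguous substring of $f$ containing every match, namely $f' = f[i_L \twodots i_R+m-1]$, where $i_L$ and $i_R$ are the leftmost and rightmost match positions of $p$ in $f$ (if there is no match, the statement is vacuous and $f'$ can be taken empty). I would then bound a maximum family $B$ of pairwise disjoint $z$-breaks in $f'$ by splitting it into three parts according to the starting positions of the breaks: $B_L$ with starts in $[i_L, i_L+m-z]$ (fully inside the leftmost match), $B_R$ with starts in $[i_R, i_R+m-z]$ (fully inside the rightmost match), and a ``middle'' group $B_M$ with starts in $[i_L+m-z+1, i_R-1]$.

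The crux is a per-match bound: any family of pairwise disjoint $z$-breaks of $f$ lying fully inside a single match region $[i, i+m-1]$ has size at most $3k$. I would prove this by classifying each such break $f[q\twodots q+z-1]$ according to its aligned pattern window $W := p[q-i+1\twodots q-i+z]$. If $W$ is itself a $z$-break of $p$ (type~A), then the $W$'s arising from our disjoint family are themselves pairwise disjoint length-$z$ blocks in $p$ (the map $q \mapsto q-i+1$ is a uniform shift) and all are $z$-breaks, so by Lemma~\ref{lemma:finding_breaks} and the hypothesis on $p$ there are at most $2k$ of them. If instead $W$ has period $\le z/2$ (type~B), then $W$ cannot equal $f[q\twodots q+z-1]$ (whose period exceeds $z/2$), so at least one mismatch lives in $[q, q+z-1]$; the type-B blocks are pairwise disjoint, so their mismatches are disjoint, and a match tolerates at most $k$ mismatches, giving at most $k$ type-B breaks. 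In total each match region holds at most $2k + k = 3k$ disjoint $z$-breaks, so $|B_L|, |B_R| \le 3k$.

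For $B_M$, the interval $[i_L+m-z+1, i_R-1]$ has length $i_R - i_L - m + z - 1$. Because $|f|\le 2m$ we have $i_R + m - 1 \le 2m$, which combined with $i_L \ge 1$ gives $i_R - i_L \le m$, so this length is at most $z-1$. Since two pairwise disjoint $z$-breaks require starting positions at distance at least $z$, the interval accommodates at most one such break, hence $|B_M| \le 1$. Combining the three bounds gives $|B| \le |B_L| + |B_R| + |B_M| \le 3k + 3k + 1 = 6k + 1$, as required.

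The main obstacle I anticipate is pinpointing the sharp per-match bound of $3k$. A naive counting ``each of the at most $2k$ pattern $z$-breaks is overlapped by at most two disjoint aligned $f$-breaks, plus at most $k$ type-B ones'' yields only $5k$ per match, which summed over the two matches $i_L, i_R$ easily exceeds $6k+1$. The key refinement is that the aligned type-A windows themselves form a disjoint family of $z$-breaks of $p$ (thanks to the uniform shift), so their number is bounded directly by the maximum number of disjoint $z$-breaks in $p$ rather than through a per-pattern-break overlap argument; the rest is then arithmetic using $|f|\le 2m$.
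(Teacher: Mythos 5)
Your proof is correct, but it takes a genuinely different route from the paper's. The paper constructs $f'$ without reference to where the matches are: it splits $f=f_lf_r$ at the middle, takes the shortest suffix of $f_l$ and the shortest prefix of $f_r$ each containing exactly $3k$ disjoint $z$-breaks, and argues by contradiction that a match escaping $f'$ would fully contain $3k$ disjoint text breaks, at least $2k$ of which must match exactly and hence induce $2k$ disjoint $z$-breaks in $p$. You instead take $f'$ to be the hull of all matches and bound its breaks directly, via the per-match estimate ``at most $2k$ type-A (aligned window is a pattern break, and the aligned windows are disjoint) plus at most $k$ type-B (aligned window is periodic, forcing a mismatch)'', plus the $|B_M|\le 1$ middle term. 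The underlying counting is the same $3k = 2k + k$ trade-off, just run in opposite directions, and your direction is arguably cleaner: the paper's contradiction technically needs $p$ to have \emph{fewer} than $2k$ disjoint breaks, whereas your forward count matches the stated hypothesis ``at most $2k$'' exactly. The real price of your choice is algorithmic rather than logical: the lemma is existential, so your proof establishes it, but your $f'$ is defined in terms of the (unknown) leftmost and rightmost matches, which is circular for the algorithm. Proposition~\ref{proposition:algorithm_text_breaks} computes ``the corresponding $f'$ from Lemma~\ref{lemma:discarding_breaks}'' in $\mathcal{O}(kz)$ time by simulating \proc{Find-breaks} from the center outward, and that only works for the paper's match-oblivious construction; with your $f'$ the subsequent development would need to be rerouted through the paper's construction anyway.
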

\begin{proof}
Split $f=f_lf_r$ so that $|f_l|,|f_r|\leq m$. Let $f_l'$ be the shortest suffix of $f_l$ having exactly $3k$ disjoint $z$-breaks (or the whole $f_l$ in case there is no such suffix). Let $f_r'$ be the shortest prefix of $f_r$ having exactly $3k$ disjoint $z$-breaks (or the whole $f_r$ in case there is no such prefix). We define $f':=f_l'f_r'$. It is easy to verify that $f'$ has at most $6k+1$ disjoint $z$-breaks. Assume for the sake of contradiction that some match of $p$ doesn't lie within $f'$, for instance it ends beyond the right end of $f'$. In such a case $f_r'$ lies fully within this match, which means at least $2k$ out of its $3k$ breaks have to match exactly. Consequently, $p$ has at least $2k$ disjoint $z$-breaks, contradiction.
\qed
\end{proof}

\begin{proposition}\label{proposition:algorithm_text_breaks}
Suppose $f$ is a pc-string, then we can find the corresponding $f'$ from Lemma~\ref{lemma:discarding_breaks} in $\mathcal{O}(kz)$ time.
\end{proposition}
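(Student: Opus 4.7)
My plan is to simulate \proc{Find-breaks} on $f$, but starting from the fixed boundary between $f_l$ and $f_r$ and scanning outward in each direction, stopping on that side as soon as $3k$ disjoint $z$-breaks have been collected (or the end of $f$ is reached). By Lemma~\ref{lemma:discarding_breaks}, the shortest suffix of $f_l$ and the shortest prefix of $f_r$ produced this way will concatenate to give a valid $f'$ with at most $6k+1$ disjoint $z$-breaks and containing every match of $p$ in $f$. Since we terminate as soon as the quota is met, the \proc{Find-breaks} loop fires a break-test at most $\mathcal{O}(k)$ times; the real work is to skip the periodic stretches between consecutive break-tests in amortized $\mathcal{O}(1)$ per block boundary, rather than $\mathcal{O}(|\text{stretch}|)$ as in the uncompressed version.

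The two primitives inherited from \proc{Find-breaks} are (i) testing whether the length-$z$ window at a given position is a $z$-break by computing its period and (ii) extending a stretch of known small period $q\le z/2$ until the periodicity fails. For (i), I read the $z$ consecutive letters of the window in constant time each (each letter of a pc-string is a specified position in $p$, so constant-time character access is available) and compute the period in $\mathcal{O}(z)$ time by a standard KMP-style procedure. For (ii), rather than advancing one letter at a time, I process the stretch block-by-block: if the current position lies inside $v_j = p[a_j..b_j]$, then the longest extension of $q$-periodicity within $v_j$ can be determined by a constant number of $\LCPref$/$\LCSuf$ queries among factors of $p$ (as in Proposition~\ref{proposition:verify_match_mismatches}, the pair of substrings being compared — $f[i..]$ and $f[i-q..]$ — spans only a constant number of $v_i$'s in the window under consideration, so the comparison reduces to $\mathcal{O}(1)$ queries covered by Lemma~\ref{lemma:LCPref}). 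If the periodicity reaches the end of $v_j$, I test in $\mathcal{O}(1)$ whether it continues into $v_{j+1}$ and repeat; otherwise I restart the outer loop of \proc{Find-breaks} at the position where it failed.

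The accounting then goes as follows. At most $\mathcal{O}(k)$ break-tests are performed, each costing $\mathcal{O}(z)$, for a total of $\mathcal{O}(kz)$. Between consecutive break-tests sits one periodic stretch, giving $\mathcal{O}(k)$ stretch-skipping phases. Since the scans traverse each block boundary of $f$ at most once and $f$ has only $l-1 \leq 4k+4$ boundaries, the aggregate number of block boundaries crossed across all stretch-skippings is $\mathcal{O}(k)$, each handled in constant time; together with the $\mathcal{O}(1)$ within-block $\LCPref$ probe per stretch, this adds a further $\mathcal{O}(k)$. Summed, the total running time is $\mathcal{O}(kz)$.

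The main obstacle is exactly step (ii): extending a periodic stretch in $\mathcal{O}(1)$ per block boundary despite the fact that two aligned positions in $f$ which are $q\le z/2$ apart may lie in different $v_i$'s and their continuations may straddle further boundaries. The resolution is to phrase the periodicity check as a constant number of $\LCPref$/$\LCSuf$ queries between factors of $p$, exploiting that on any given call only a bounded number of consecutive $v_i$'s on each side is involved; once this primitive is implemented carefully, the early stopping and the aggregated-boundary accounting outlined above deliver the claimed bound.
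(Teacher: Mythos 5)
Your proposal is correct and follows essentially the same route as the paper: simulate \proc{Find-breaks} with early stopping after $3k$ breaks per side, pay $\mathcal{O}(z)$ for each of the $\mathcal{O}(k)$ break tests, and extend each periodic stretch via $\LCPref$/$\LCSuf$ queries whose cost is charged to the block boundaries of the pc-string, which are crossed at most once in total. The paper phrases the stretch extension as a single query $|s|-1=\LCPref(f_r,f_r[q+1..|f_r|])$ answered in time proportional to the number of blocks in the result, which is just a more compact formulation of your block-by-block walk.
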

\begin{proof}
Concentrate for example on finding $f_r'$ (using notation from the proof of Lemma~\ref{lemma:discarding_breaks}). We want to simulate the algorithm \proc{Find-breaks} until the $3k$-th break is found. The method from Lemma~\ref{lemma:finding_breaks} would give us $\mathcal{O}(m)$ time, which is not good enough, but we can improve it since $f$ is a pc-string. 

We will show how to find the first break (if any) in $f_r$, the next ones are determined in the same way. We take the prefix $s'$ of $f_r$ of length $z$, and determine its period $q$ in $\mathcal{O}(z)$ time using the standard algorithm. If $q$ exceeds $\frac{z}{2}$ then $s'$ is a break and we are done. If not, the situation is as follows: we are given a string $f_r$ represented as a concatenation of $\mathcal{O}(k)$ factors of the pattern, and we want to find its unique prefix $s$, such that its length-$z$ suffix is a $z$-break and $s[1..|s|-1]$ has period $q$.  We need to find out how long $s$ is, or in other words we need to compute how far the period $q$ extends. It turns out that there is a simple formula for this, namely $|s|-1=\LCPref(f_r,f_r[q+1..|f_r|])$. We can answer such a query in time proportional to the number of blocks in the result. 

Locating a single break using the above method can take even up to $\mathcal{O}(k)$ time, but the total complexity amortizes to $\mathcal{O}(k)$, because we always cut off the processed prefix of $f_r$ and then work with the remaining part. So in total we spend $\mathcal{O}(k\cdot z + k)=\mathcal{O}(kz)$ time. 
\qed
\end{proof}

By the discussion above we can restrict ourselves to pc-strings having at most $\mathcal{O}(k)$ disjoint $z$-breaks. We will give now an algorithm achieving $\mathcal{O}(zk^4)$ running time for pattern matching with $k$ mismatches in such pc-strings. While this is not the best algorithm we have obtained, it serves well as an introduction to the more complicated $\mathcal{O}(zk^3)$ algorithm presented in the next section. 

Let us summarize the situation. We are given a pattern of the form $p=s_1b_1...s_rb_rs_{r+1}$ and a pc-string $f=s'_1b'_1...s'_{q}b'_{q}s'_{q+1}$, where $r,q=\mathcal{O}(k)$, $b$'s denote $z$-breaks and $s$'s are strings with periods not exceeding $\frac{z}{2}$. We will soon see that alignments of the pattern, for which the pattern breaks and text breaks are not too close from each other, are nice to work with. So we want to handle the remaining alignments separately.

\begin{proposition}\label{proposition:close_breaks}
There are at most $\mathcal{O}(zk^3)$ alignments of the pattern in the text such that some text break (or text endpoint) is within a distance of $z(k+1)$ from some pattern break (or pattern endpoint).
\end{proposition}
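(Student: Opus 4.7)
The plan is a straightforward double-counting argument. Recall that both the pattern and the (trimmed) pc-string contain only $O(k)$ disjoint $z$-breaks, namely $r = O(k)$ in $p$ and $q = O(k)$ in $f$. Together with the two endpoints on each side, this gives $O(k)$ distinguished positions in the pattern and $O(k)$ distinguished positions in the text; call these sets $P$ and $T$ respectively.

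An alignment of the pattern in the text is specified by a single integer starting position $\alpha$. If I fix a pattern distinguished position $u \in P$ (interpreted as an offset inside $p$) and a text distinguished position $t \in T$, then the image of $u$ under the alignment starting at $\alpha$ is the text position $\alpha + u - 1$. The requirement that this image be within distance $z(k+1)$ from $t$ is exactly
\[
|\,t - (\alpha + u - 1)\,| \le z(k+1),
\]
which pins $\alpha$ to an interval of length $2z(k+1)+1 = \mathcal{O}(zk)$, and hence admits $\mathcal{O}(zk)$ integer values.

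Now I would simply sum over all pairs $(u,t) \in P \times T$. Each pair contributes $\mathcal{O}(zk)$ candidate alignments, and an alignment counted by the proposition is captured by at least one such pair. Since $|P| \cdot |T| = \mathcal{O}(k) \cdot \mathcal{O}(k) = \mathcal{O}(k^2)$, the union bound yields
\[
\mathcal{O}(k^2) \cdot \mathcal{O}(zk) \;=\; \mathcal{O}(zk^3)
\]
alignments in total, which is the claimed bound. There is no real obstacle here; the only subtle point is remembering to include the pattern endpoints and the text endpoints among the distinguished positions (so that alignments whose nearest ``close'' feature is a boundary of $p$ or $f$ are accounted for), which only changes $|P|$ and $|T|$ by additive constants and does not affect the asymptotic count.
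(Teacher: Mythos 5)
Your proposal is correct and is essentially the same union-bound argument as the paper's: treat endpoints as breaks, count the $\mathcal{O}(k^2)$ (pattern break, text break) pairs, and observe each pair constrains the alignment to an interval of $\mathcal{O}(zk)$ positions. The only cosmetic difference is that the paper's constant $2z(k+2)$ also accounts for the length $z$ of a break rather than treating it as a single point, which does not affect the asymptotics.
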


\begin{proof}
We treat pattern endpoints  and text endpoints as breaks for simplicity. There are $\mathcal{O}(k^2)$ pairs (pattern break, text break). Each such pair can violate the rule on at most $2z(k+2)$ positions. This gives us $\mathcal{O}(zk^3)$ positions in total.
\qed
\end{proof}

In this (simple) version of the algorithm we just verify all these $\mathcal{O}(zk^3)$ positions in $\mathcal{O}(k)$ time per one. This results in $\mathcal{O}(zk^4)$ complexity and leaves us with the convenient case, where all distances between pattern and text breaks (or endpoints) are at least $z(k+1)$. We call such alignments {\it fine}, and we will
soon see that if such a fine alignment results in a match, it forces all periodic stretches involved in the match to have the same period.

Starting from now we assume that the distances between consecutive breaks in the text (pattern) are at least $z(k+1)$, and otherwise group some breaks together so that the groups meet this condition. Our argument will work also for such groups but it is simpler to describe it just for breaks. 
Similarly, we want to assume that $s_1$ (and $s_{r+1}$) is either empty or has length at least $z(k+1)$, so if $0<|s_1|<z(k+1)$ we extend the first break to the left so that $s_1$ becomes empty (and do the same with $s_{r+1}$). 

One can easily see that there are at most $\mathcal{O}(k^2)$ intervals of consecutive fine alignments in the text. Within such an interval the order of appearance of the breaks does not change. Fix one interval and suppose we have at least one match there. We want to argue that in such a case all periodic stretches involved in this match are compatible, meaning
that their canonical periods are identical, and moreover start with the same offset modulo the period.

\begin{proposition}\label{proposition:periodic_mismatches}
Suppose $w_1, w_2$ are periodic strings with periods not exceeding $\frac{z}{2}$. If $w_1\neq w_2$ and $|w_1|=|w_2|\geq z(k+1)$ then there are at least $k+1$ mismatches between these two words.
\end{proposition}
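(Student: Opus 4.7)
My plan is to partition $w_1$ (and correspondingly $w_2$) into $k+1$ disjoint length-$z$ windows, which is possible because $|w_1| \geq z(k+1)$, and to show that each window must contain at least one mismatch. Summing over the $k+1$ windows then yields the desired lower bound.

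To analyze a single length-$z$ window I would argue by contradiction: suppose $w_1$ and $w_2$ agree on some length-$z$ block $B$. Let $q_1, q_2 \leq z/2$ be the periods of $w_1, w_2$. The block $B$ inherits period $q_1$ from $w_1$ and period $q_2$ from $w_2$, and since $|B| = z \geq q_1 + q_2$, the periodicity lemma gives that $B$ has period $g := \gcd(q_1,q_2)$.

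I then split into two subcases. If $q_1 = q_2 =: q$, both $w_1$ and $w_2$ are entirely determined by any $q$ consecutive characters; since they agree on $B$ and $|B| \geq q$, we would get $w_1 = w_2$, contradicting the hypothesis. Otherwise $q_1 \neq q_2$, say $q_1 < q_2$, so $g < q_2$ and $g \mid q_2$. Pick $q_2$ consecutive characters of $B$; viewed inside $w_2$ they are a cyclic shift of $w_2$'s canonical period $u$, and viewed inside $B$ they have period $g$. Since $g \mid |u|$ and $g < |u|$, such a word must equal the $(q_2/g)$-th power of its length-$g$ prefix, so $u$ would not be primitive. But $u$ must be primitive, since otherwise $w_2$ would have a period strictly smaller than $q_2$, contradicting the minimality of $q_2$.

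The main obstacle is recognizing that the periodicity lemma by itself is not enough: the period $g$ it supplies inside $B$ must then be propagated to the canonical period of $w_2$ (or $w_1$), which is where the divisibility $g \mid q_2$ and primitivity of the canonical period come in. Everything else is dimensional bookkeeping: the length bound $|w_1| \geq z(k+1)$ produces exactly $k+1$ windows, and the bound $q_1,q_2 \leq z/2$ is precisely what makes $q_1+q_2 \leq z$ so the periodicity lemma applies inside each window.
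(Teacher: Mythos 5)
Your proof is correct and follows essentially the same route as the paper's: the same case split on whether $q_1=q_2$, with the periodicity lemma applied to a mismatch-free window in the unequal case to contradict the minimality of the period. The only difference is presentational — you count mismatches in uniform length-$z$ windows rather than windows of length $q_1$ or $q_1+q_2$, and you spell out the period-propagation/primitivity step that the paper compresses into ``the periodicity lemma gives a contradiction.''
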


\begin{proof}
We have two cases depending on whether the periods of $w_1, w_2$ are equal or not. Let $q_1$ be the period of $w_1$ and $q_2$ the period of $w_2$. If $q_1=q_2$ then in each fragment of length $q_1$ we have at least one mismatch, so at least $2(k+1)$ mismatches in total. Assume now $q_1\neq q_2$, then each fragment of length $q_1+q_2$ contains a mismatch (if not then the periodicity lemma gives a contradiction). Since $q_1+q_2\leq z$, there are at least $k+1$ mismatches.
\qed
\end{proof}

\begin{figure}[t]
\includegraphics[width=\textwidth]{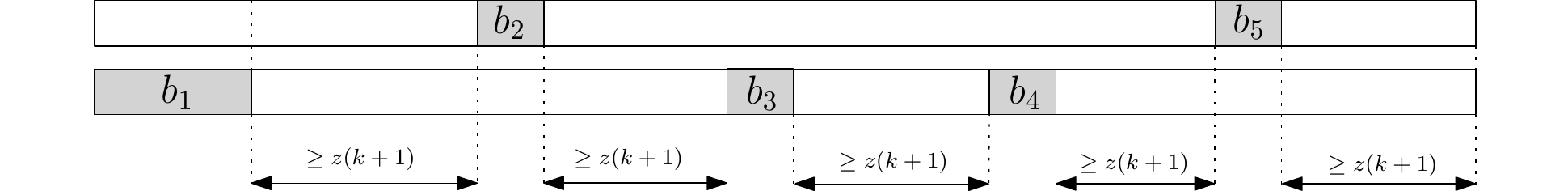}
\caption{Long overlaps between stretches imply their canonical periods are the same.}
\label{figure:overlapping stretches}
\end{figure}

Suppose there is a match at some fine alignment. Between two consecutive breaks (we consider here all pattern and text breaks) there is always a periodic portion of length at least $z(k+1)$. From Proposition~\ref{proposition:periodic_mismatches}, there must be a perfect match between the corresponding fragments. So in particular, the periods of the corresponding pattern periodic stretch and text periodic stretch agree. Considering the way how the stretches overlap each other, see Figure~\ref{figure:overlapping stretches}, one can deduce by transitivity that all periodic stretches involved in this match have the same period (they even have the same canonical period). Surprisingly, it means that if some two periodic stretches in the pattern have different canonical periods then there is no hope for matches at fine positions. 

Suppose now all the periodic stretches in the pattern have the same canonical period $u$. We consider an interval of consecutive fine alignments. Assume there is a match somewhere in this interval. One can see that each two alignments $i$ and $i+|u|$ from the interval have the same number of mismatches, because each break is aligned with a $u$-periodic stretch, so 
the fragment we compare it to is the same.
So in order to find all matches within one interval, we only need to verify at most $|u|\leq \frac{z}{2}$ alignments. Each verification takes $\mathcal{O}(k)$ time, so the time taken over all intervals is $\mathcal{O}(k^2\cdot \frac{z}{2} \cdot k)=\mathcal{O}(zk^3)$.

\begin{theorem}\label{theorem:algorithm_highlyperiodic}
For highly periodic patterns, pattern matching with $k$ mismatches in pc-strings can be solved in $\mathcal{O}(zk^4)$ time.
\end{theorem}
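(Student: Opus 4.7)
The plan is to combine the reductions and observations already developed in Section~\ref{section:highly} into a single time bound. First I would invoke Lemma~\ref{lemma:discarding_breaks} together with Proposition~\ref{proposition:algorithm_text_breaks} to cut $f$ down to a subword $f'$ containing $\mathcal{O}(k)$ disjoint $z$-breaks while preserving every match of $p$; this preprocessing costs $\mathcal{O}(kz)$ time. After grouping consecutive text (pattern) breaks that lie within distance $z(k+1)$ of each other, and extending the first/last break outward whenever $|s_1|$ or $|s_{r+1}|$ is a nonzero value below $z(k+1)$, we reach the clean setting described in the section: consecutive breaks (both in the pattern and in the text) are separated by periodic stretches of length at least $z(k+1)$, with canonical periods of size at most $z/2$.

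Next I would split the set of alignments into two classes. The \emph{close} alignments are those for which some text break or endpoint lies within distance $z(k+1)$ of some pattern break or endpoint. By Proposition~\ref{proposition:close_breaks} there are only $\mathcal{O}(zk^3)$ of them, and by Proposition~\ref{proposition:verify_match_mismatches} each can be verified in $\mathcal{O}(k)$ time, costing $\mathcal{O}(zk^4)$ in total. The remaining \emph{fine} alignments are naturally partitioned into $\mathcal{O}(k^2)$ maximal intervals (one for each relative ordering of text and pattern breaks), and within such an interval the order of appearance of breaks is fixed.

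The core step, which is the main conceptual obstacle, is the analysis inside a fixed interval. For any fine alignment realizing a match, each pattern break is aligned fully inside a periodic stretch of the text of length at least $z(k+1)$ and vice versa; consequently every two overlapping periodic stretches of $p$ and $f$ involved in the match must coincide on a window of length at least $z(k+1)$, and Proposition~\ref{proposition:periodic_mismatches} then forces their canonical periods to be identical. Propagating this through the overlap structure depicted in Figure~\ref{figure:overlapping stretches}, all periodic stretches occurring in the pattern must share a single canonical period $u$; if they do not, the interval contributes nothing and we skip it. Otherwise, shifting an alignment by $|u|$ positions inside the interval leaves every pairwise comparison between a break and its aligned periodic stretch unchanged, so the number of mismatches is a function of the shift modulo $|u|$. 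Hence it suffices to test at most $|u|\leq z/2$ representative alignments per interval.

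Finally, summing the costs, the fine case contributes $\mathcal{O}(k^2)\cdot\mathcal{O}(z/2)\cdot\mathcal{O}(k)=\mathcal{O}(zk^3)$, which is dominated by the $\mathcal{O}(zk^4)$ verification cost of the close alignments. Adding the $\mathcal{O}(kz)$ preprocessing yields the claimed $\mathcal{O}(zk^4)$ bound. The delicate part that I would double-check is the book-keeping when merging close breaks into groups and extending the endmost stretches, since one has to make sure that this regrouping does not destroy the $\mathcal{O}(k)$ bound on the number of groups nor the periodicity hypothesis used in Proposition~\ref{proposition:periodic_mismatches}; once this is in place, the remainder is a direct combination of the already-proved propositions.
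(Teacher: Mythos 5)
Your proposal is correct and follows essentially the same route as the paper: reduce the text to $\mathcal{O}(k)$ breaks, verify the $\mathcal{O}(zk^3)$ close alignments in $\mathcal{O}(k)$ time each, and handle the $\mathcal{O}(k^2)$ intervals of fine alignments by observing that a match forces a common canonical period $u$, so only $|u|\leq z/2$ representatives per interval need checking. The paper likewise waves at the break-grouping bookkeeping you flag (``our argument will work also for such groups but it is simpler to describe it just for breaks''), so there is no substantive difference.
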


\section{Faster algorithm for highly periodic patterns}
\label{section:faster}

The purpose of this section is to show a faster algorithm for pattern matching with $k$ mismatches in pc-strings, assuming the pattern is highly periodic. We will improve the time complexity from $\mathcal{O}(zk^4)$ to $\mathcal{O}(zk^3)$. We will make sure that the additional space required by the improved algorithm is just $\mathcal{O}(zk^2)$, which will
be crucial in achieving linear space usage of the whole solution.

In the previous section we showed that one can assume that the text has at most $\mathcal{O}(k)$ disjoint $z$-breaks. The idea of the basic algorithm was to first work with the ``bad" alignments. An alignment was considered ``bad" if there was a text break and a pattern break close to each other (within a distance of $z(k+1)$). We took all such alignments and verified them in $\mathcal{O}(k)$ time each. The fine alignments (meaning not ``bad") were analyzed in total time $\mathcal{O}(zk^3)$. This approach, although simple, seems to be very naive. Each time there is a single pair of close breaks, we waste $\Omega(k)$ time to deal with such an alignment. It turns out that we can verify a ``bad" position in time proportional to the number of ``bad" breaks. In the following definitions and lemmas we make the idea formal.

\begin{definition}
In a fixed alignment of the pattern in the text, we call a pattern break black if there is some text break or text endpoint within distance $23zk$ from it. Similarly, we call a text break black if there is some pattern break or pattern endpoint within distance $23zk$ from it. Non-black breaks are called white.
\end{definition}

Note that one extreme case when a break is black is when it overlaps with some other break. It is convenient to deal with such situations separately. There are only $\mathcal{O}(zk^2)$ such alignments, so they can be all verified in $\mathcal{O}(zk^3)$ time, and from now on we consider only alignments where no two breaks overlap. Moreover, we can restrict our attention to alignments with at least one black break. The rest are among the fine alignments, which can be processed as shown in the previous section.

\begin{lemma}\label{lemma:fast_alignments}
After $\mathcal{O}(zk^3)$ time preprocessing, given an alignment with $B\geq 1$ black breaks we can test whether it corresponds to a match in $\mathcal{O}(B)$ time.
\end{lemma}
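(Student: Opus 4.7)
My plan is to reduce each verification to a constant-time baseline lookup plus $O(1)$ work per black break. The baseline will capture what the mismatch count would be if every break were white (i.e., sat in a purely $u$-periodic neighbourhood), and the $B$ local corrections will account for how the black breaks actually behave.

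The preprocessing will first compute, for every break $b$ of either the pattern or the text and every offset $o\in\{0,\ldots,|u|-1\}$, the value $c_b(o)$: the number of mismatches (capped at $k+1$) between $b$ and the infinite $u$-periodic word shifted by $o$. Using the constant-time LCPref queries available inside the pc-string, each entry costs $O(k)$, and since there are $O(k)$ breaks and $O(z)$ offsets this phase fits in $O(zk^2)$ time. Summing the $c_b(o)$ over all breaks for each fixed $o$ then produces the baseline $M_{\mathrm{fine}}(o)$ in a further $O(zk)$. Next, for every pair $(b_i,b'_j)$ of a pattern break and a text break (and, analogously, every pair of a break with a text or pattern endpoint) and every relative offset $\delta$ with $|\delta|\le 23zk$ at which their interaction windows can overlap, I will tabulate the exact joint mismatch count inside the common window. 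There are $O(k^2\cdot zk)=O(zk^3)$ such entries, and I will obtain them pair by pair by sliding $\delta$ one unit at a time so that the window shifts by one character at each end, which a single LCPref query updates in amortized $O(1)$. This yields $O(zk)$ per pair and $O(zk^3)$ overall, matching the claimed preprocessing bound.

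To answer a query on alignment $d$ with $B$ black breaks, I will first check that the canonical periods of all pattern and text stretches are consistent with the shift $d\bmod|u|$; if not, Proposition~\ref{proposition:periodic_mismatches} already forces more than $k$ mismatches and the alignment is rejected. Otherwise I set $M:=M_{\mathrm{fine}}(d\bmod|u|)$, retrieve the $B$ black breaks in $O(B)$ time from a precomputed list of black intervals, and group them into maximal clusters of mutually $23zk$-close breaks (which takes $O(B)$ once the list is kept sorted). For each cluster I then use the tabulated joint entry to subtract the ``white'' contributions $c_b(\cdot)$ of its constituent breaks (which $M_{\mathrm{fine}}$ has already added in) and add back the exact joint contribution. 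Each cluster update is $O(\text{cluster size})$, so the total query cost is $O(B)$, and the alignment is a match iff $M\le k$.

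The main obstacle will be justifying, in the query step, that the actual mismatch count inside a cluster of three or more mutually close breaks really is recoverable from the pairwise joint-contribution table in $O(\text{cluster size})$ lookups. This rests on the observation that between any two consecutive non-overlapping breaks inside the cluster the text and the pattern are still purely $u$-periodic with matching shift, so those inter-break portions contribute no extra mismatches; only the break--break and break--endpoint interactions matter, and those are precisely what the $O(zk^3)$ table records. Hence the correction decomposes additively across the cluster, yielding the $O(B)$ bound.
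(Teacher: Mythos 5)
Your high-level strategy (partition the work so that the cost is $\mathcal{O}(1)$ per black break, using tables built in $\mathcal{O}(zk^3)$ time) is the right one and matches the paper's, but the ``baseline plus pairwise correction'' realization has a genuine gap at its core. The step you flag as the main obstacle is indeed where it fails: you assert that between two consecutive breaks inside a cluster the pattern and text are ``still purely $u$-periodic with matching shift, so those inter-break portions contribute no extra mismatches,'' and earlier you reject any alignment whose stretches are not all period-consistent by appealing to Proposition~\ref{proposition:periodic_mismatches}. That proposition only applies to overlaps of length at least $z(k+1)$. Near black breaks the overlap between a pattern stretch and a text stretch can be arbitrarily short, so two stretches with different canonical periods (or the same period but shifted phase) may contribute anywhere between $0$ and $k$ mismatches there; such alignments can still be matches and must be counted exactly, not rejected. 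Your pairwise break--break table records only the ``joint mismatch count inside the common window'' of two breaks (which is moreover ill-defined once you have excluded overlapping breaks), so these short stretch-versus-stretch regions are counted by nothing. This is precisely the hard sub-case in the paper (a length-$L$ suffix of one stretch against a length-$L$ prefix of another with $L<z(k+1)$), and handling it within the $\mathcal{O}(zk^3)$ preprocessing budget requires the dedicated structure built there (answers for overlaps up to $d$ plus prefix sums of $\mism(\beta)$ over arithmetic progressions modulo $|u|$), not a sliding-window pair table.

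Two further problems. First, the query-time consistency check ``that the canonical periods of all pattern and text stretches are consistent'' inspects $\Theta(k)$ stretches, which already violates the $\mathcal{O}(B)$ bound when $B=o(k)$; the paper avoids this by making each region between consecutive black breaks answerable with $\mathcal{O}(1)$ lookups, folding all intervening white breaks into precomputed distances of whole break-to-break fragments against $uuu\ldots$ and $\ldots uuu$ (and the necessary consistency conditions, such as the group lengths being multiples of $|u|$, are part of those $\mathcal{O}(1)$ checks). Second, the baseline $M_{\mathrm{fine}}(o)=\sum_b c_b(o)$ is not well-defined: there need not be a single global canonical period $u$, and even when there is, the phase at which each break meets the periodic word is a different function of the alignment (break lengths are not multiples of $|u|$), so the per-break offsets cannot all be read off from $d\bmod|u|$. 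Relatedly, for a run of white text breaks the correct reference period $u$ comes from a pattern stretch that is unknown at preprocessing time; the paper needs the $\TotalLen$ majority argument to pick a single candidate per fragment. As written, your scheme would both over- and under-count mismatches and can reject true matches.
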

We will prove the above lemma in the remaining part of this section. Suppose for a moment it holds, and consider all alignments with some black breaks. Call the number of black breaks in these alignments $B_1, B_2, ..., B_g$. Then by the above lemma, each single alignment can be processed in $\mathcal{O}(B_i)$ time, so the total time is $\mathcal{O}(\sum_{i=1}^{g}B_i)$. Every specific break is black at most $\mathcal{O}(k\cdot (46z(k+1)+2z))=\mathcal{O}(zk^2)$ times, so $\mathcal{O}(\sum_{i=1}^{g}B_i)=\mathcal{O}(zk^3)$. So if we use this method to process the alignments, we will obtain an algorithm with $\mathcal{O}(zk^3)$ running time.

The main idea in the proof of the lemma is to partition the alignment into disjoint parts, such that in each of these parts we can count the number of mismatches easily. More precisely, if there are $B$ black breaks in the considered alignment, we distinguish $\mathcal{O}(B)$ intervals where the Hamming distance can be determined in $\mathcal{O}(1)$ time. For this, we need some results memorized in arrays. We will give now the details by analyzing some cases of the relative arrangement of black and white breaks. Recall we have already reduced the situation to the case where no two breaks overlap.

Consider a periodic stretch $s$ between two breaks in the pattern (text). It can be written in the form $s=u_1u^iu_2$ where $u$ is its canonical period (of length at most $\frac{z}{2}$), $i\geq 0$, $u_1$ is some suffix of $u$ and $u_2$ is some prefix of $u$. Note also that the word $u$ is primitive in such a case. It is easier to imagine the whole picture (and also to describe it) if $u_1=u_2=\varepsilon$, in other words when $s$ is a power of its canonical period. We can achieve it by merging $u_1$ ($u_2$ respectively) to the neighbouring break on the left (on the right). After this operation the breaks have lengths between $z$ and $2z$ and all periodic stretches, maybe except these at the start and at the end of the word, are powers of primitive words.

Let us fix an alignment with at least one black break, and take any black pattern break (the reasoning for text breaks is the same). We want to count the number of mismatches between it and the corresponding periodic stretch from the text. To answer such a query in constant time, we build in the preprocessing phase a table with all results. For each pattern break and periodic stretch $u^{i}$ from the text we count mismatches between the break and the stretch for every possible shift smaller than $|u|\leq\frac{z}{2}$. Each such count can be performed in $\mathcal{O}(k)$ time, which results in $\mathcal{O}(zk^3)$ time preprocessing.

\begin{figure}[t]
\includegraphics[width=\textwidth]{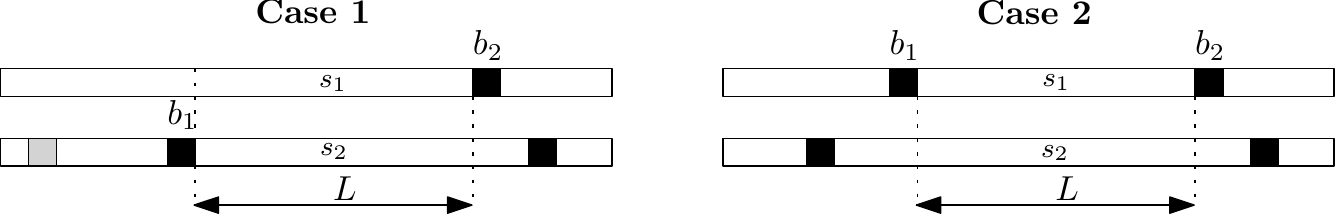}
\caption{Two consecutive black breaks.}
\label{figure:black_breaks_1}
\end{figure}

Now take two consecutive black breaks $b_1, b_2$. Consider the case, when there are no more breaks between them (of course there are no black ones, because we chose $b_1, b_2$ to be consecutive, but some white breaks might be there). Two possible situations are depicted in Figure~\ref{figure:black_breaks_1}. 
Our aim is now to count the number of mismatches between $s_1$ and $s_2$, which are length-$L$ subwords of periodic stretches from the text and pattern, respectively. If $L\geq z(k+1)$ then by Proposition~\ref{proposition:periodic_mismatches} either there are no mismatches between $s_1$ and $s_2$, or there are at least $k+1$ of them. It is easy to detect which case occurs: the strings agree if and only if their canonical periods are the same and they start with the same period offset, which can be determined in $\mathcal{O}(1)$ time after some straightforward preprocessing. So we can assume $L<z(k+1)$. We consider the cases from the Figure~\ref{figure:black_breaks_1} separately.

\begin{figure}[t]
\centering
\includegraphics[width=\textwidth]{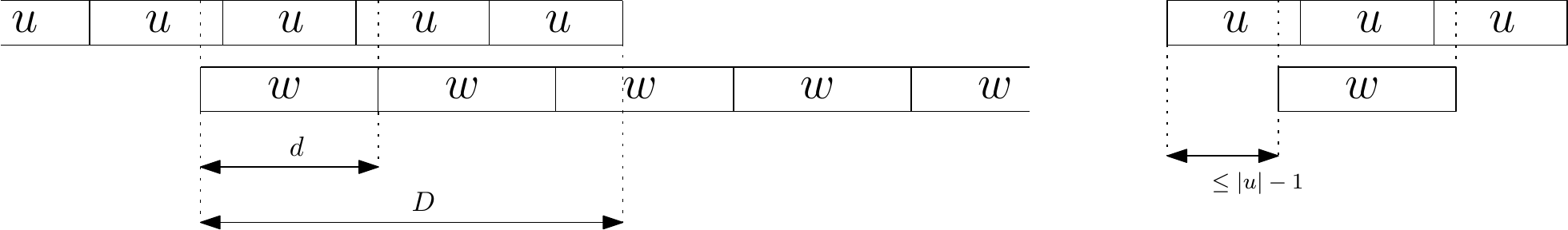}
\caption{Precomputation for a pair of periodic stretches.}
\label{figure:stretches_overlap}
\end{figure}

\begin{mycases}[listparindent=15pt]
\item In this case $s_1$ is length-$L$ suffix of some text periodic stretch, $s_2$ is length-$L$ prefix of some pattern periodic stretch. We want to precalculate all possible $\mathcal{O}(zk^3)$ results of such queries. Doing it as usually in $\mathcal{O}(k)$ time per one is unfortunately too slow. Fix one pair of periodic stretches. We will calculate all the $\mathcal{O}(zk)$ required numbers in $\mathcal{O}(zk)$ total time. Let $w$ be the canonical period of $s_1$, $d=|w|$ and let $u$ be the canonical period of $s_2$. First calculate the answer for all overlaps of length at most $d$ in $\mathcal{O}(dk)=\mathcal{O}(zk)$ time. Now to process an overlap of length $D>d$, we first use the result for $D-d$, see Figure~\ref{figure:stretches_overlap}. Then we only need to take into account the prefix of length $d$, or in other words add the number of mismatches between $w$ and some factor of an infinite word $u^{\infty}$. There are just $|u|$ essentially different factors as far as counting mismatches is concerned, see Figure~\ref{figure:stretches_overlap}, so precomputing all these $|u|$ numbers takes $\mathcal{O}(zk)$ time.

By the above discussion, we can precalculate all the required numbers in $\mathcal{O}(zk^3)$ time, which is fast enough, but unfortunately the space usage of $\mathcal{O}(zk^3)$ is 
too high for the purpose of achieving linear total space complexity. We will now explain how to decrease it to $\mathcal{O}(zk^2)$, i.e., $\mathcal{O}(z)$ per a pair of stretches,
while keeping the constant retrieval time. As previously, we compute and explicitly store the answers for all overlaps of length at most $d$, which takes $\mathcal{O}(z)$ space.
Now consider an overlap $D>d$. The overlap can be partitioned into two pieces: the part of length
$\alpha=d\lfloor\frac{D}{d}\rfloor$, and the remaining short part of length $D\bmod d$. The answer for the latter we have precomputed, so we only need how to compute the number of 
mismatches in the former, or in other words between a word of the form $w^\alpha$, and a word of the form $u^\infty$, with some shift $\beta < |u|$.
Denote by $\mism(\beta)$ the number of mismatches between $w$ and $u^\infty$ aligned with a shift $\beta <|u|$. Then the sought number
is equal to
$$\mism(\beta)+\mism((\beta + d)\bmod |u|)+... +\mism((\beta + \alpha d)\bmod |u|).$$
We need to evaluate such sum in constant time. For this we can arrange all $\mism(\beta)$ in a single array, and augment the array with all its prefix sums.
In more detail, we consider all $\beta$ with the same remainder modulo $\gcd(d,|u|)$ separately. If the remainder is $r$, we put $\mism(r)$, $\mism((r+d)\bmod |u|)$,
$\mism((r+2d)\bmod |u|)$, and so on in the array, and then compute the prefix sums. Then, knowing where $\mism(\beta)$ appears in the array, we can compute the
sought sum in constant time. For each pair of stretches we need to store a constant number of arrays of size $\mathcal{O}(z)$, so the total space complexity is as required.

\item In this case $s_1$ is a complete periodic stretch, and $s_2$ is a factor of a periodic stretch. Note that if $s_2$ has period $d$ then there are only $d$ essentially different alignments of such form. Overall there are only $\mathcal{O}(zk^2)$ possible queries, so we precalculate all of them in $\mathcal{O}(zk^3)$ time.
\end{mycases}

Then we need to consider the more general situation when there are some white breaks between two consecutive black breaks $b_1$, $b_2$. Consider two cases.

\begin{figure}[t]
\includegraphics[width=\textwidth]{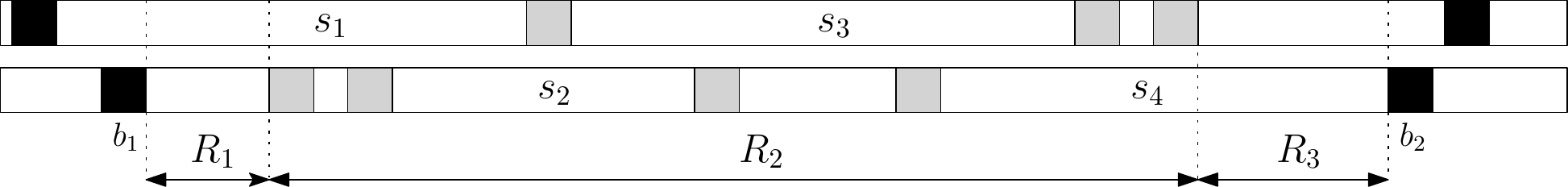}
\caption{Black breaks are represented as black boxes, white breaks are represented as grey boxes.}
\label{figure:black_breaks_2}
\vspace{0.2cm}
\includegraphics[width=\textwidth]{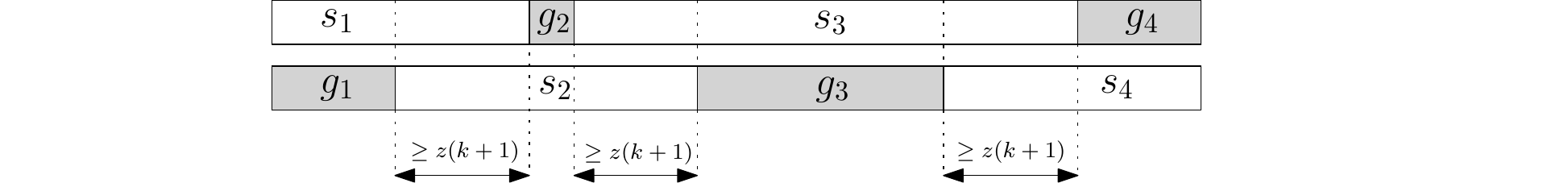}
\caption{Groups of white breaks are depicted as grey rectangles $g_{1},g_{2},g_{3},g_{4}$.}
\label{figure:white_breaks_1}
\vspace{0.2cm}
\includegraphics[width=\textwidth]{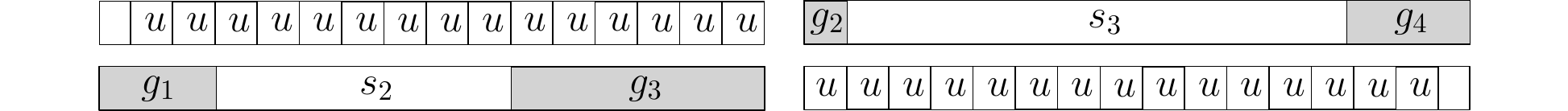}
\caption{Hamming distances that need to be precomputed. Note that $|g_2|$ and $|g_3|$ are multiples of $|u|$, but the same is not necessarily true for $|g_1|$ and $|g_4|$.}
\label{figure:white_breaks_2}
\end{figure}

\begin{mycases}[listparindent=15pt]
\item Suppose among the white breaks between $b_1$ and $b_2$ there are both pattern breaks and text breaks. An example of such a situation is depicted in Figure~\ref{figure:black_breaks_2}, where one can see three regions $R_1,R_2,R_3$.
The numbers of mismatches in regions $R_1$ and $R_3$ can be calculated in $\mathcal{O}(1)$ time by the methods explained previously. Therefore, we concentrate on $R_2$, which is the smallest region containing all white breaks between $b_1$ and $b_2$. The next step is to organize the white breaks into groups, where each group is a maximal set of consecutive white breaks of the same origin (meaning pattern break or text break),
see Figure~\ref{figure:white_breaks_1}.
Note that no two consecutive groups are of the same type, so the distances between them are at least $23zk\geq z(k+1)$. By Proposition~\ref{proposition:periodic_mismatches}, if the current alignment is a match, the overlap between $s_1$ and $s_2$ matches exactly, the same applies for $s_2, s_3$ and $s_3, s_4$. Hence, assuming there is a match at the current alignment, all $s_1, \ldots, s_4$ have the same canonical period $u$. Moreover, because all periodic stretches are power of the same primitive word, the lengths of $g_2$ and $g_3$ must be multiplies of $|u|$.
We can easily verify these conditions using some additional precomputed data. If the conditions are satisfied, then the total number of mismatches in $R_2$ is equal to the Hamming distance between $g_{2}s_{3}g_{4}$ and $uuu...$ plus the Hamming distance between $g_{1}s_{2}g_{3}$ and $...uuu$, where $uuu...$ ($...uuu$) denotes the $u$-periodic word with the appropriate length starting (ending) with $u$, see Figure~\ref{figure:white_breaks_2}. In the general case we need to know for any fragment of the pattern or the text starting and ending with a break its distance to $...uuu$ and $uuu...$. For each such fragment the only two interesting candidates for $u$ are the canonical period of the stretch preceding and the canonical period of the stretch succeeding the first group in the fragment, so we can simply store both results and always use the appropriate one. The whole preprocessing costs us $\mathcal{O}(k)$ time per fragment, so $\mathcal{O}(k^3)$ time in total.

\item All white breaks between $b_1$ and $b_2$ are of the same type, say for example they are all text breaks. Similarly as in the previous case, we will argue that it is enough to precompute the Hamming distances of text fragments with some $u$-periodic strings. We start with locating the rightmost black pattern break on the left of all our white breaks, and the leftmost black pattern break on the right. Call the region between them $R$, and the region containing all white breaks $R_2$, see Figure~\ref{figure:white_breaks_3}. As in the previous case, the nontrivial part is counting the number of mismatches within $R_{2}$ in constant time. Let $g$ be the fragment of the text corresponding to $R_2$, and $u$ the canonical period of the periodic stretch corresponding to $R$ in the pattern. As previously, we want to precompute the number of mismatches between $g$ and some $u$-periodic string. As opposed to the previous case, now it is not trivial to find just a few interesting candidates for $u$ in the preprocessing phase, and we cannot afford to perform the precomputation for all possible canonical periods. Nevertheless, it is possible to find for each fragment a unique potential candidate for $u$ using a somewhat more complex reasoning, as shown below.

\begin{figure}[t]
\includegraphics[width=\textwidth]{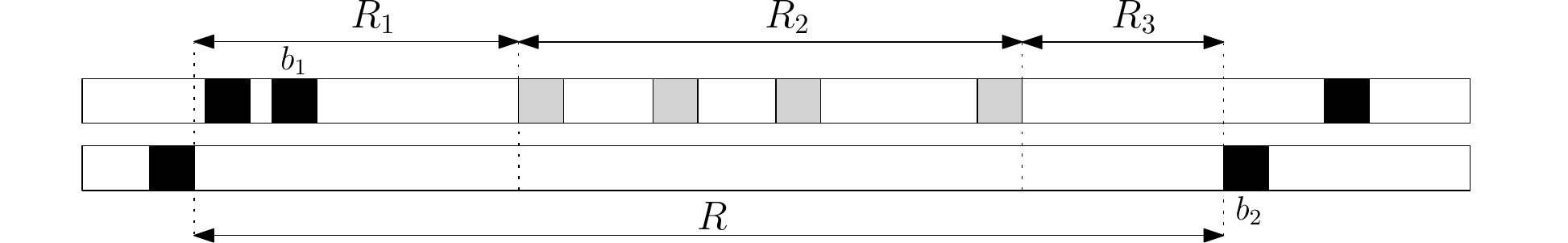}
\caption{There may be many text breaks in $R_1$ and $R_3$.}
\label{figure:white_breaks_3}
\end{figure}

Define a new region $R'$ to be $R_2$ extended by exactly $23zk$ on both sides (note it is still fully contained inside $R$), and suppose the pattern matches at the current alignment. For this to happen, necessarily most of the periodic stretches in the text within $R'$ should have the same canonical period $u$, which is formalized below.
\begin{proposition}
\label{proposition:most periods}
The total length of the periodic stretches in the text within $R'$, whose canonical period is not $u$, doesn't exceed $z(7k+2)$. 
\end{proposition}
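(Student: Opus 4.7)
My plan is to mismatch-count. Since $R' \subseteq R$ and the pattern inside $R$ is a periodic stretch with canonical period $u$, every text position inside $R'$ is aligned against a $u$-periodic piece of the pattern. Under the matching assumption the total mismatch count is at most $k$, so each ``bad'' text portion (i.e., the intersection with $R'$ of a periodic text stretch whose canonical period differs from $u$) must contribute only a bounded share.

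For a bad stretch $S$ with canonical period $v \neq u$, let $Y$ be its portion inside $R'$, of length $L_i$; then $Y$ has period $v$ (inherited from $S$) and is aligned against a $u$-periodic pattern fragment, with both periods at most $z/2$ and therefore $u+v \leq z$. I claim every window of length $u+v$ inside $Y$ contains at least one mismatch. If not, the window matches, and by the periodicity lemma it admits period $d=\gcd(u,v)$. Two subcases then arise. If $d < v$, then since the window has length $u+v \geq v$ it covers a full $v$-period of $S$; propagating the $d$-periodicity of the window through the $v$-periodicity of $S$ forces $S$ itself to have period $d < v$, contradicting the canonicality of $v$. If $d = v$ (i.e., $v \mid u$), then the window covers a full $u$-period of the pattern; the same propagation forces the pattern to acquire period $v < u$, contradicting the canonicality of $u$. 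In either case we reach a contradiction, so the window contains a mismatch and $Y$ therefore contributes at least $\lfloor L_i/z \rfloor \geq L_i/z - 1$ mismatches.

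Summing over all bad portions with total length $L = \sum_i L_i$ and count $c$ gives
\[
k \;\geq\; \sum_i \lfloor L_i/z \rfloor \;\geq\; L/z - c, \qquad\text{i.e.,}\qquad L \leq z(k+c).
\]
To bound $c$, I would observe that $R$ lies strictly between two consecutive pattern breaks, so no pattern break enters $R'$ and every break inside $R'$ is a text break. By Lemma~\ref{lemma:discarding_breaks} the text contains at most $6k+1$ disjoint $z$-breaks, so $R'$ contains at most $6k+1$ breaks, which split it into at most $6k+2$ (possibly partial) periodic-stretch portions; in particular $c \leq 6k+2$ and $L \leq z(k + 6k + 2) = z(7k+2)$, as required.

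The main delicate point will be the per-portion mismatch bound, particularly the case split $\gcd(u,v) < v$ versus $\gcd(u,v) = v$: both subcases need a separate periodicity-lemma contradiction, but in both a window of length $u+v$ suffices to propagate a smaller period through one of the two aligned periodic strings. Short portions with $L_i < z$ simply contribute $\lfloor L_i/z \rfloor = 0$, which is vacuously correct and is absorbed cleanly by the $-1$ slack in the summation.
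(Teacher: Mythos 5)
Your proof is correct and follows essentially the same route as the paper's: a per-stretch lower bound of $\lfloor \ell_i/z\rfloor$ mismatches obtained from the periodicity lemma (the paper simply cites the proof of Proposition~\ref{proposition:periodic_mismatches} for this step), a bound of $6k+2$ on the number of stretches coming from the $6k+1$ bound on text breaks, and a summation against the budget of $k$ mismatches. The only nuance is your subcase $\gcd(|u|,|v|)=|v|$ when $|v|=|u|$, where one should conclude that $u$ and $v$ are conjugate and hence have the same canonical representative rather than that the pattern acquires a strictly shorter period; this is a minor point that the paper glosses over as well.
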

\begin{proof}
A periodic fragment of length $\ell$ whose canonical period is other than $u$ generates at least $\lfloor \frac{\ell}{z} \rfloor$ mismatches (because in every fragment of length $z$ there is at least one mismatch, see the proof of Proposition~\ref{proposition:periodic_mismatches}).
Let the lengths of the stretches with a wrong canonical period be $\ell_{1},\ell_{2},\ldots,\ell_{s}$, and assume that $\sum_{i}\ell_{i} > z(7k+2)$. Then the number of mismatches is at least $\sum_{i}\left\lfloor\frac{\ell_{i}}{z}\right\rfloor \geq \sum_{i}\frac{\ell_{i}-z+1}{z}>7k+2-s$.  Because the whole text contains at most $6k+1$ breaks, $s\leq 6k+2$, so there are at least $k+1$ mismatches.
\qed
\end{proof}
\begin{definition}
$\TotalLen(w)$ is the total length of the periodic stretches in the text within $R'$ having canonical period $w$.
\end{definition} 
Note that the length of $R'$ exceeds $46zk$. Thus, taking into account the at most $6k+1$ occurrences of breaks in region $R'$, each of length at most $2z$, and plugging in Proposition~\ref{proposition:most periods}, we obtain that: $$\TotalLen(u)\geq |R'|-z(7k+2)-2z(6k+1)\geq |R'|-23zk>\frac{|R'|}{2}.$$

Now suppose that we choose $w$ to be the canonical period of (any) periodic stretch in the text (within $R'$), which maximizes $\TotalLen(w)$, and use it instead of the (unknown in the preprocessing stage) $u$. We take each text fragment starting and ending with a break, extend it on both sides by $23zk$, find $w$ maximizing the value of $\TotalLen(w)$ (in case there is a tie, choose any of them). Then, we calculate the number of mismatches between this fragment and all $|w|$ relevant factors of $w^{\infty}$. There are $\mathcal{O}(k^2)$ fragments of such a form, each is processed in $\mathcal{O}(zk)$ time, so the total preprocessing time is $\mathcal{O}(zk^{3})$. Then to count the mismatches between $g$ and the corresponding factor of $u^{\infty}$, we first check what $w$ we have performed the precomputation for. If $w=u$, we can use the precomputed value.
If $w\neq u$, we claim that the current alignment cannot correspond to a match. Indeed, otherwise we would have $\TotalLen(u)>\frac{|R'|}{2}$, so $u$ would be the unique canonical period maximizing $\TotalLen(w)$.
\end{mycases}

By the arguments given above, whenever we have an alignment with $B$ black breaks, we may partition it into $\mathcal{O}(B)$ regions and either count the number of mismatches in each of them, or report that it exceeds $k$, in constant time, assuming a previous precomputation requiring $\mathcal{O}(zk^{3})$ time. Thus the theorem.

\begin{theorem}\label{theorem:algorithm_highlyperiodic_fast}
For highly periodic patterns, pattern matching with $k$ mismatches in pc-strings can be solved in $\mathcal{O}(zk^3)$ time using $\mathcal{O}(zk^2)$ additional space.
\end{theorem}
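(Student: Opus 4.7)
The plan is to combine the ingredients developed throughout the section into a single procedure and to bound each contribution to the running time and space by $\mathcal{O}(zk^{3})$ and $\mathcal{O}(zk^{2})$ respectively. First, I would apply Proposition~\ref{proposition:algorithm_text_breaks} together with Lemma~\ref{lemma:discarding_breaks} to trim the given pc-string $f$ to a substring $f'$ with at most $\mathcal{O}(k)$ disjoint $z$-breaks, in $\mathcal{O}(kz)$ time, and focus on finding matches inside $f'$. After this reduction, each alignment falls into exactly one of three categories: alignments in which two breaks overlap, fine alignments (no black breaks), and alignments with at least one black break.

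For the first two categories the bounds follow from material already in place. The fine alignments are handled exactly as in the proof of Theorem~\ref{theorem:algorithm_highlyperiodic}: the constraint forces all involved periodic stretches to share a canonical period $u$, so within each of the $\mathcal{O}(k^{2})$ intervals of fine alignments it suffices to verify at most $|u|\leq z/2$ shifts in $\mathcal{O}(k)$ time each, for total $\mathcal{O}(zk^{3})$. Alignments with two overlapping breaks number only $\mathcal{O}(zk^{2})$ (each of the $\mathcal{O}(k^{2})$ pairs of breaks can overlap in $\mathcal{O}(z)$ positions), and each can be verified via Proposition~\ref{proposition:verify_match_mismatches} in $\mathcal{O}(k)$ time, again $\mathcal{O}(zk^{3})$.

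The heart of the argument is the third category, handled through Lemma~\ref{lemma:fast_alignments}. An alignment with $B\geq 1$ black breaks is verified in $\mathcal{O}(B)$ time, and the amortization relies on the observation that a specific break is black in at most $\mathcal{O}(zk^{2})$ distinct alignments, since it must lie within distance $23zk$ of one of $\mathcal{O}(k)$ other breaks or endpoints. Summing over the $\mathcal{O}(k)$ breaks, $\sum_{i} B_{i}=\mathcal{O}(zk^{3})$, matching the desired bound.

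The main obstacle is establishing Lemma~\ref{lemma:fast_alignments} within $\mathcal{O}(zk^{2})$ additional space, rather than the naive $\mathcal{O}(zk^{3})$ one obtains from storing a Hamming-distance table for every overlap of every pair of periodic stretches. The key is the prefix-sum trick from Case~1 of the two-consecutive-black-breaks analysis: a long overlap of length $D>d$ between stretches with canonical periods of length $d$ and $|u|$ decomposes into a short remainder (for which we explicitly store the $\mathcal{O}(z)$ answers) and a repeated block whose cost is a sum $\sum_{j}\mism((\beta+jd)\bmod|u|)$ that can be evaluated in $\mathcal{O}(1)$ by grouping indices by their residue modulo $\gcd(d,|u|)$ and storing prefix sums. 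This brings storage down to $\mathcal{O}(z)$ per pair and $\mathcal{O}(zk^{2})$ overall. The remaining subtlety is Case~2 of the second analysis, where between two black breaks lie only white breaks of a single type: there the candidate canonical period $u$ is not known at preprocessing time, but Proposition~\ref{proposition:most periods} guarantees $\TotalLen(u)>|R'|/2$, so choosing the unique $w$ maximizing $\TotalLen(w)$ in a $\mathcal{O}(zk)$-neighborhood of each fragment yields the right candidate whenever a match is possible, and the mismatch count against the $|w|$ relevant factors of $w^{\infty}$ can be precomputed in $\mathcal{O}(zk)$ per fragment, for $\mathcal{O}(zk^{3})$ total. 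Combined, these precomputations respect both the time and the space budgets, completing the proof.
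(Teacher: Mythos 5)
Your proposal is correct and follows essentially the same route as the paper: trim the pc-string to $\mathcal{O}(k)$ breaks, split alignments into overlapping-break, fine, and black-break categories, amortize the black-break verifications via Lemma~\ref{lemma:fast_alignments} using the fact that each break is black $\mathcal{O}(zk^{2})$ times, and obtain the $\mathcal{O}(zk^{2})$ space bound through the prefix-sum decomposition of long overlaps and the $\TotalLen$-based choice of the candidate period $w$. This matches the paper's argument in both structure and the key quantitative bounds.
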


It is now a good moment to specify $z$. Let $z=\frac{\sqrt{m}}{k}$. Using Theorem~\ref{theorem:algorithm_nonperiodic} and Theorem~\ref{theorem:algorithm_highlyperiodic_fast} we see that such a choice of $z$ gives us a running time $\mathcal{O}(k\log m+\sqrt{m}k^{2})=\mathcal{O}(\sqrt{m}k^{2})$ for pattern matching with $k$ mismatches in pc-strings. The additional space needed is $\mathcal{O}(zk^2)=\mathcal{O}(\sqrt{m}k)$.

\begin{remark}
Recall we assumed earlier that $z\geq 3$. Thus we should check, whether the choice $z=\sqrt{\frac{m}{k^2}}$ fulfils such requirement. Unfortunately, it doesn't whenever $m<9k^2$. But in such a case we can uncompress the pc-string and use the $\mathcal{O}(mk)$ algorithm~\cite{LandauMismatches}, also obtaining $\mathcal{O}(mk)=\mathcal{O}(\sqrt{m}\sqrt{k^2}k)=\mathcal{O}(\sqrt{m}k^{2})$ running time.
\end{remark}

By Theorem~\ref{theorem:reduce_to_PCstrings} we obtain automatically that pattern matching with $k$ mismatches in LZW-compressed text can be solved in $\mathcal{O}(nk\log^{2}m+m+n\sqrt{m}k^{2})$, which is $\mathcal{O}(n\sqrt{m}k^{2})$ because $n\geq\sqrt{m}$. The space complexity is $\mathcal{O}(n+m+\sqrt{m}k)$. This is bounded by $\mathcal{O}(n+m)$ whenever $k=\mathcal{O}(\sqrt{m})$. In the opposite case we use the $\mathcal{O}(mk)$ algorithm~\cite{LandauMismatches} to process each pc-string using
$\mathcal{O}(n+m)$ space and $\mathcal{O}(nmk)=\mathcal{O}(n\sqrt{m}k^2)$ total time.

\begin{theorem}
Pattern matching with $k$ mismatches in LZW-compressed strings can be solved in $\mathcal{O}(n\sqrt{m}k^{2})$ time and $\mathcal{O}(n+m)$ space.
\end{theorem}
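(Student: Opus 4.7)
The plan is to combine Theorem~\ref{theorem:reduce_to_PCstrings} with the two pc-string algorithms and to choose the parameter $z$ that balances them. Setting $z=\sqrt{m}/k$ makes the weakly-periodic bound $\mathcal{O}(k\log m+km/z)$ from Theorem~\ref{theorem:algorithm_nonperiodic} and the highly-periodic bound $\mathcal{O}(zk^{3})$ from Theorem~\ref{theorem:algorithm_highlyperiodic_fast} both equal to $\mathcal{O}(\sqrt{m}k^{2})$, so matching in a single pc-string costs $\mathcal{O}(\sqrt{m}k^{2})$ in either regime. Internal matches are located in $\mathcal{O}(nk+m)$ time by Lemma~\ref{lemma:internal_matches}, and crossing matches reduce to $\mathcal{O}(n)$ pc-string instances via Theorem~\ref{theorem:reduce_to_PCstrings}.

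For the time bound I would plug $T_{PC}(m)=\mathcal{O}(\sqrt{m}k^{2})$ into Theorem~\ref{theorem:reduce_to_PCstrings}, obtaining $\mathcal{O}(nk\log^{2}m+m+n\sqrt{m}k^{2})$. To collapse this to $\mathcal{O}(n\sqrt{m}k^{2})$ I would use $n=\Omega(\sqrt{m})$: because the $i$-th codeword has length at most $i$, the text has length $N=\mathcal{O}(n^{2})$, and we may assume $m\leq N$ (otherwise no occurrence exists at all). Then $m=\mathcal{O}(n\sqrt{m})\leq\mathcal{O}(n\sqrt{m}k^{2})$, and $nk\log^{2}m=\mathcal{O}(n\sqrt{m}k^{2})$ since $\log^{2}m=\mathcal{O}(\sqrt{m})$ for all but finitely many $m$, which are absorbed into the base case.

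The main obstacle is the space bound. Theorem~\ref{theorem:algorithm_highlyperiodic_fast} uses $\mathcal{O}(zk^{2})=\mathcal{O}(\sqrt{m}k)$ additional working space, while all the preprocessing structures and the PH-decomposition fit within $\mathcal{O}(n+m)$. When $k=\mathcal{O}(\sqrt{m})$ the auxiliary space is $\mathcal{O}(m)$ and the total space is $\mathcal{O}(n+m)$. In the remaining regime $k=\omega(\sqrt{m})$ I would abandon the pc-string machinery and, for each length-$\mathcal{O}(m)$ window produced by the reduction, uncompress it (reusing the same buffer) and run the classical Landau--Vishkin $\mathcal{O}(mk)$-time algorithm~\cite{LandauMismatches}. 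The auxiliary space remains $\mathcal{O}(n+m)$, and the total running time is $\mathcal{O}(nmk)=\mathcal{O}(n\sqrt{m}\cdot\sqrt{m}\cdot k)=\mathcal{O}(n\sqrt{m}k^{2})$ since $\sqrt{m}<k$ in this regime. The same fallback also handles the corner case $m<9k^{2}$, in which the chosen $z=\sqrt{m}/k$ would violate the standing assumption $z\geq 3$ made in Section~\ref{section:weakly}.
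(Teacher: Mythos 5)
Your proposal is correct and follows essentially the same route as the paper: choose $z=\sqrt{m}/k$ to balance Theorems~\ref{theorem:algorithm_nonperiodic} and~\ref{theorem:algorithm_highlyperiodic_fast}, plug $T_{PC}(m)=\mathcal{O}(\sqrt{m}k^{2})$ into Theorem~\ref{theorem:reduce_to_PCstrings}, absorb the lower-order terms via $n\geq\sqrt{m}$ (which the paper justifies by $N\leq n^{2}$), and fall back to the uncompressed $\mathcal{O}(mk)$ algorithm per window when $k=\omega(\sqrt{m})$ to keep the space at $\mathcal{O}(n+m)$, which also covers the $m<9k^{2}$ corner case where $z\geq 3$ fails.
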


\section{Algorithm for pattern matching with errors}
\label{section:errors}
In this section we discuss the algorithm for pattern matching with $k$ errors in pc-strings. It is obtained by combining our methods for compressed strings (applied for pattern matching with mismatches) with the ideas used by Cole and Hariharan~\cite{ColeHariharan}. In order to explain how to combine them, we need to restate most of their method. We skip the proofs whenever they don't require any modification to work in our setting.

In the following text, to simplify the description, we often say ``find the edit distance between $s_1$ and $s_2$". By this we actually mean ``find the exact value of the edit distance between $s_1$ and $s_2$ if it does not exceed $k$,  or report that it is greater than $k$ in the opposite case". We also often say ``matches at a given position'', which actually means ``there is a subword of the text ending at a given position such that the edit distance between it and the pattern is at most $k$''.

We start with a lemma which allows us to efficiently verify a potential match.

\begin{lemma}[see Section 5 of~\cite{ColeHariharan}]\label{lemma:verify_match}
Suppose $t$ is a text and $p$ is a pattern (given in some form), and assume that it is possible to answer $\LCSuf$ queries between their factors in constant time.
\begin{enumerate}
\item{We can verify whether $p$ matches (with at most $k$ errors) at a given position in $\mathcal{O}(k^2)$ time.}
\item{We can find all matches ending in a given window of length $l$ in $\mathcal{O}(k(k+l))$ time.}
\item{Given $t'$ (a subword of $t$) and $p'$ (a subword of $p$), we can calculate the edit distances between all $k$ longest suffixes of $t'$ and all $k$ longest suffixes of $p'$ in $\mathcal{O}(k^2)$ time.}
\end{enumerate}
\end{lemma}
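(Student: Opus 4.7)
My plan is to run the Landau--Vishkin diagonal dynamic programming~\cite{Landau} on the reversed strings $t^{R}$ and $p^{R}$, since $\LCSuf$ on factors of $t$ and $p$ coincides with $\LCPref$ on the corresponding factors of $t^{R}$ and $p^{R}$, so the constant-time $\LCSuf$ assumption supplies exactly the primitive needed by the standard algorithm. The DP maintains a table $D[d,e]$ defined as the furthest position reached on diagonal $d$ after $e$ edits; each entry is computed in $\mathcal{O}(1)$ by the recurrence
$$D[d,e] = \max\bigl(D[d-1,e-1],\, D[d,e-1]+1,\, D[d+1,e-1]+1\bigr)$$
followed by an $\LCPref$ query that extends through the longest matching run along the diagonal.

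For part~1, I would anchor the DP at the alignment corresponding to the given position, fill in the $\mathcal{O}(k^{2})$ entries with $|d|\le e\le k$, and declare a match the first moment some $D[d,e]$ reaches the far end of the pattern. For part~2, a match ending at position $j$ inside a length-$l$ window can begin only in an interval of width $l+\mathcal{O}(k)$, so it suffices to run the DP over $\mathcal{O}(k+l)$ diagonals in parallel; the $\mathcal{O}(k(k+l))$ entries are still filled in constant time each, and each ending position reports itself the first time the corresponding entry reaches the far end of the pattern.

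For part~3, I would run a single instance of the DP on $(t')^{R}$ and $(p')^{R}$ starting at their common left end, for $|d|\le k$ and $e\le k$, spending $\mathcal{O}(k^{2})$ time in total. The edit distance between the length-$i$ suffix of $t'$ and the length-$j$ suffix of $p'$ equals $\ed((t')^{R}[1..i],(p')^{R}[1..j])$, which is read off as the smallest $e$ for which $D[j-i,e]\ge i$, giving an $\mathcal{O}(1)$ lookup per pair and $\mathcal{O}(k^{2})$ total. The only subtlety is the change of coordinates between $\LCSuf$ on the originals and $\LCPref$ on the reversals; once this is set up, the argument is a direct transcription of Section~5 of~\cite{ColeHariharan}, and I do not expect any obstacle beyond this bookkeeping.
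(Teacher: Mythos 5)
Your proposal is correct and is essentially the argument the paper relies on: the paper gives no proof of its own but defers to Section~5 of Cole and Hariharan, which is exactly the Landau--Vishkin diagonal dynamic programming with $\LCPref$-extension on the reversed strings that you describe. The only nitpick is in part~3, where "the smallest $e$ with $D[j-i,e]\ge i$" is not literally an $\mathcal{O}(1)$ lookup; but since $D[d,e]$ is nondecreasing in $e$, a single merge of the sorted values $D[d,0],\dots,D[d,k]$ with the $\mathcal{O}(k)$ relevant positions on each diagonal answers all queries in $\mathcal{O}(k^{2})$ total, so the bound stands.
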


In our setting, the text is given as a pc-string. In such a case it is not trivial to answer a $\LCSuf$ query in constant time, as this would require a predecessor search to locate the relevant block. Hence we need to maintain the current position in the text during the computation, or more precisely the current block and the current letter there. Inspecting the proof of the above lemma gives us that this can be easily done without sacrificing the time complexity, as long as all input positions (including the endpoints of a window or a subword) are given in such form.

Now we are ready to extend our methods for pattern matching with mismatches to the current setting. Recall that in this problem we are asked to find all ending positions of matches. Nevertheless, for convenience we will concentrate on the starting positions. This is justified, because in case of pc-strings the situation is symmetric. Let us start with the simpler case of non-periodic patterns.
\begin{theorem}\label{theorem:algorithm_errors_nonperiodic}
Suppose the pattern contains at least $2k$ disjoint $z$-breaks. Then pattern matching with $k$ errors in pc-strings can be solved in $\mathcal{O}(k\log m+\frac{k^2m}{z})$ time.
\end{theorem}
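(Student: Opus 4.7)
The plan is to follow the structure of Theorem~\ref{theorem:algorithm_nonperiodic}, but to replace the mismatch-specific sparsification by one tolerating $\pm k$ shifts, and to swap the $\mathcal{O}(k)$-time per-position verification for the $\mathcal{O}(k^{2})$-time Landau--Vishkin verification offered by Lemma~\ref{lemma:verify_match}. The main new complication is that a ``good'' break no longer pins down the match starting position exactly: up to $k$ insertions/deletions to its left can shift the implied starting position by up to $\pm k$.

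First I would run \proc{Find-breaks}$(p)$ (once, in $\mathcal{O}(m)$) to obtain $2k$ disjoint occurrences of $z$-breaks in the pattern, with distinct breaks $b_{1},\ldots,b_{r}$ appearing $x_{1},\ldots,x_{r}$ times and $\sum_{i}x_{i}=2k$. Then Lemma~\ref{lemma:break_matching} lists all their occurrences inside $f$ in $\mathcal{O}(k\log m+\mathit{occ})$ time. Since every $b_{i}$ has period exceeding $z/2$, it occurs $\mathcal{O}(m/z)$ times in $f$, giving $\mathit{occ}=\mathcal{O}(km/z)$.

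The crucial step is the voting. If $p$ aligns to some $f[a..b]$ with at most $k$ errors, each edit operation can spoil at most one of the $2k$ chosen break occurrences, so at least $k$ of them survive and appear verbatim inside $f$. A surviving break at pattern position $o_{i,j}$ landing at text position $q$ forces $a=q-o_{i,j}+1-\delta$, where $\delta$ is the net insertion/deletion count in $p[1..o_{i,j}-1]$; in particular $|\delta|\le k$. Hence every text occurrence of $b_{i}$ at position $q$ should vote for the interval $[q-o_{i,j}+1-k,\,q-o_{i,j}+1+k]$ of possible starting positions. Marking at the level of individual positions would produce $\Theta(k^{2}m/z)$ marks and $\Theta(km/z)$ heavy candidates, which is one factor of $k$ too many. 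To shave it off I would bucket the starting positions into windows of width $k$; since a length-$(2k+1)$ interval meets at most three such buckets, each of the $\mathcal{O}(km/z)$ (pattern-occurrence, text-occurrence) pairs casts $\mathcal{O}(1)$ bucket votes, and every true match at starting position $a$ deposits at least $k$ votes in the bucket containing $a$. By averaging, at most $\mathcal{O}(m/z)$ buckets are ``heavy''.

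Finally I would verify each heavy bucket, which is a window of length $k$, using Lemma~\ref{lemma:verify_match}(2) in its symmetric starting-position form (for instance, by running it on the reversed pattern and text), detecting all matches starting in the bucket in $\mathcal{O}(k(k+k))=\mathcal{O}(k^{2})$ time. Over the $\mathcal{O}(m/z)$ heavy buckets this contributes $\mathcal{O}(k^{2}m/z)$, and adding the $\mathcal{O}(k\log m+km/z)$ cost of locating and bucketing the break occurrences gives the claimed $\mathcal{O}(k\log m+k^{2}m/z)$ bound. The one subtlety to watch, and in my view the main obstacle, is making Lemma~\ref{lemma:verify_match}(2) honour constant-time $\LCSuf$ queries over the pc-string representation of $f$: this is handled exactly as in Proposition~\ref{proposition:verify_match_mismatches}, by locating the relevant $v_{i}$ once per bucket and then maintaining the current $v_{i}$ and offset as the Landau--Vishkin scan advances, so that no per-step predecessor search is needed.
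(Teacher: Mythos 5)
Your proposal is correct and follows essentially the same route as the paper: mark/vote using the $2k$ disjoint pattern breaks located via Lemma~\ref{lemma:break_matching}, observe that at most $k$ edit operations can destroy at most $k$ of the disjoint break occurrences so the surviving ones concentrate $\geq k$ votes within $\pm k$ of the true alignment, conclude that only $\mathcal{O}(m/z)$ length-$k$ windows can contain matches, and verify each such window in $\mathcal{O}(k^{2})$ time with Lemma~\ref{lemma:verify_match}. Your explicit bucketing of votes into width-$k$ windows is just a slightly more detailed account of the paper's partition of the text into length-$k$ fragments and its ``window plus two neighbours'' count, and your handling of $\LCSuf$ queries over the pc-string matches the paper's remark following Lemma~\ref{lemma:verify_match}.
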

\begin{proof}
We modify the reasoning from Lemma~\ref{lemma:sparsify_matches} to show that there are at most $\mathcal{O}(\frac{m}{z})$ text windows of size $k$ containing potential matches. Partition the text into disjoint fragments of length $k$. For each $z$-break $b$ we do the marking in exactly the same way as in Lemma~\ref{lemma:sparsify_matches}. Consider now a fixed length-$k$ window in the text, it is easy to see that if there is some match ending there, then the total number of marks in the window and its two neighbouring windows must be at least $k$. Since the total number of marks is only $\mathcal{O}(\frac{mk}{z})$, we conclude that there are only $\mathcal{O}(\frac{m}{z})$ length-$k$ windows containing matches. 

Now we proceed as in the proof of Theorem~\ref{theorem:algorithm_nonperiodic}. First we use Lemma~\ref{lemma:break_matching} to simulate the marking as above. This gives us in $\mathcal{O}(k\log m+\frac{mk}{z})$ time at most $\mathcal{O}(\frac{m}{z})$ length-$k$ windows where some matches could potentially begin. We verify all these windows in $\mathcal{O}(k^2)$ time per one using Lemma~\ref{lemma:verify_match}. The total running time is $\mathcal{O}(k\log m+\frac{mk^2}{z})$. 
\qed
\end{proof}

We are left with the case when the pattern has at most $2k$ disjoint $z$-breaks. We argue (as in the case of mismatches) that the text (given as a pc-string) can be trimmed in $\mathcal{O}(zk)$ time to a string having at most $\mathcal{O}(k)$ disjoint $z$-breaks without losing any matches, see Lemma~\ref{lemma:discarding_breaks} and Proposition~\ref{proposition:algorithm_text_breaks}. Thus now the problem to solve is: given a pc-string $f$ and a pattern $p$, both containing at most $\mathcal{O}(k)$ disjoint $z$-breaks, find all starting positions of matches (with at most $k$ errors) of $p$ in $f$. We will construct an algorithm solving this problem in $\mathcal{O}(zk^4)$ time.

The first step is to deal with the inconvenient situation when we have two or more breaks too close to each other in the text or pattern. For this, as in~\cite{ColeHariharan}, we use the notion of {\it intervals}. Consider a maximal group of (say, pattern) breaks such that the periodic stretch between neighbouring breaks is shorter than $2z(k+2)$. We take the substring $s$ of the pattern containing all breaks from the group and extend it on both sides so that it starts and ends with a suitable periodic fragment. More precisely, suppose on the left of $s$ there is a periodic stretch with canonical period $w$, then we extend $s$ to the left, so that the length of the added part is between $z(k+1)$ and $z(k+1)+\frac{z}{2}$, and has prefix $w$. Then the extended string starts with at least $2(k+1)$ repetitions of $w$. Do the same on the right side and call the resulting string $I$, then $I$ is an interval. Note that there are situations when the extension to the left  or to the right is impossible, because the string $s$ lies to close to some endpoint. In such a case we extend $s$ till the left (right) endpoint and say that $I$ is incomplete on the left (on the right). We process all breaks in the text in the same manner.

If the pattern is shorter than $\mathcal{O}(zk^2)$, it can happen that there is only one interval incomplete on both sides. We don't want to deal with such pathological situations, so we decompress the pc-string and apply the Landau-Vishkin algorithm~\cite{Landau}. This results in a $\mathcal{O}(mk)=\mathcal{O}(zk^{3})$ running time for this case. From now on, we assume that there are no intervals which are incomplete on both sides.

Consider now all alignments of the pattern in the text such that some text break (or some text endpoint) is within distance $2z(k+2)$ from some pattern break (or some pattern endpoint). It is easy to see, that there are at most $\mathcal{O}(k^2)$ windows of size $\mathcal{O}(zk)$ of such positions. In our algorithm, we verify all these windows in total time $\mathcal{O}(k^2\cdot k(zk+k))=\mathcal{O}(zk^4)$ using Lemma~\ref{lemma:verify_match} to process whole windows. So let us now concentrate on the remaining {\it fine} alignments. Observe that now no two intervals overlap. Moreover, if the current position corresponds to a match, all periodic stretches between the intervals in both the pattern and the text have the same canonical period $u$, see Lemma 7.1 in~\cite{ColeHariharan}. This observation greatly simplifies the calculation of the number of errors at a fine alignment. To proceed, we need one more definition.

\begin{definition}
Let $b$ be any interval and $u$ some primitive string. We define the locked edit distance of $b$ as follows.
\begin{enumerate}
\item{If $b$ is complete on both sides, then: $$\led(b)=\min \{\ed(b,u^{\alpha}):|b|-k\leq |u^{\alpha}|\leq |b|+k\}$$}
\item{If $b$ is incomplete on the right, then: $$\led(b)=\min \{\ed(b,s):s\mbox{ is a prefix of }uuu... \mbox{ and }|s|\leq |b|+k\}$$}
\item{If $b$ is incomplete on the left, then: $$\led(b)=\min \{\ed(b,s):s\mbox{ is a suffix of }...uuu \mbox{ and }|s|\leq |b|+k\}$$}
\end{enumerate}
\end{definition}
Note that we can perform one computation of $\led(b)$ in $\mathcal{O}(k^2)$ time. Assume for example that $b$ is complete on both sides. We calculate the biggest ${\alpha}$ for which $|u^{\alpha}|\leq |b|+k$. To find the answer, it is enough to compute the edit distances between $b$ and $\mathcal{O}(k)$ longest suffixes of $u^{\alpha}$ (in fact we are interested only in suffixes being multiples of $u$, but in the worst possible case $|u|=1$, so it doesn't help much) in $\mathcal{O}(k^2)$ time using Lemma~\ref{lemma:verify_match}. We calculate all the locked edit distances of all $\mathcal{O}(k)$ intervals in the preprocessing stage in $\mathcal{O}(k^3)$ time.

Our goal is now to present simple formulas for computing the number of errors at a given fine alignment. Let $b_1, b_2, ..., b_q$ be the sequence of intervals (both pattern and text) involved in this alignment in the same order as they appear. Observe that there are only $\mathcal{O}(k^2)$ possible sequences of intervals we need to consider (each corresponding to some set of consecutive fine alignments). Let us analyze two cases separately: when the first interval $b_1$ comes from the pattern, and when it comes from the text.

\begin{figure}[t]
\includegraphics[width=\textwidth]{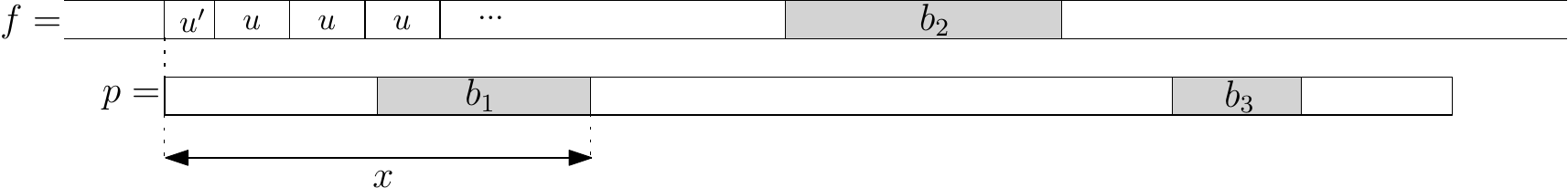}
\caption{Fine alignment where the first interval comes from the pattern.}
\label{figure:leftvalp}
\end{figure}

\begin{figure}[t]
\includegraphics[width=\textwidth]{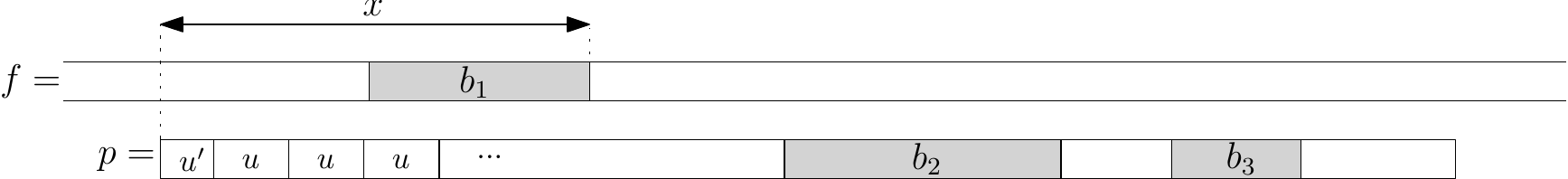}
\caption{Fine alignment where the first interval comes from the text.}
\label{figure:leftvalt}

\end{figure}

\begin{mycases}
\item{The situation is depicted in Figure~\ref{figure:leftvalp}. We denote by $u'$ the suffix of $u$ which appears at the beginning of the alignment in the text. Also let us call $x$ the prefix of the pattern ending at the last character of $b_1$. Define $\leftval_p(x)=\min\{\ed(x,u'u^{\alpha}):|x|-k\leq |u'u^{\alpha}|\leq |x|+k\} $. Cole and Hariharan showed that the number of errors at the considered alignment equals to: $$\leftval_p(x)+\led(b_2)+\led(b_3)+...+\led(b_q).$$
Thus we need to calculate $\leftval_p(x)$, which can be done in $\mathcal{O}(k^2)$ in a similar way as computing locked edit distances. Note also that the string $x$ doesn't change at different alignments, so it is enough to precompute all $\leftval_p$'s for all $|u|$ different values of $u'$. We do it in $\mathcal{O}(zk^2)$ time in the preprocessing phase, so all alignments with $b_1$ coming from the pattern can be served in $\mathcal{O}(zk^2 )$ total time. 
 
 }
\item{The situation is depicted in Figure~\ref{figure:leftvalt}. We denote by $u'$ the suffix of $u$ which appears at the beginning of the pattern. Also let us call $x$ the factor of the text starting at the alignment position and ending at the last character of $b_1$. Define $\leftval_t(x)=\min\{\ed(x,u'u^{\alpha}):|x|-k\leq |u'u^{\alpha}|\leq |x|+k\} $. Cole and Hariharan showed that the number of errors at the considered alignment equals to:  $$\leftval_t(x)+\led(b_2)+\led(b_3)+...+\led(b_q).$$ 
Note that the situation now seems to be more complicated than in Case 1, because there are many possibilities for the string $x$. However, $x$ is always of the form $u''u^{\beta}b_1$ and one can see (or find in \cite{ColeHariharan}), that $\leftval_t(u''u^{\beta+1}b_1)=\leftval_t(u''u^{\beta}b_1)$ (the main reason being that $b_1$ starts with many copies of $u$, so one of them have to match exactly). This allows us to precalculate only $\leftval_t(u''b_1)$ for all possible suffixes $u''$ of $u$. There are $\mathcal{O}(k)$ candidates for $b_1$, so the precomputation of all $\leftval_t$'s can be performed in $\mathcal{O}(zk^3)$ time. Hence all alignments with $b_{1}$ coming from the text can be served in $\mathcal{O}(zk^3)$ time.

}
\end{mycases}

The above analysis shows that all fine alignments can be served in total time $\mathcal{O}(zk^3)$. So assuming the pattern is highly periodic, we have just obtained an algorithm for pattern matching with $k$ errors in pc-strings with $\mathcal{O}(zk^4)$ running time. This allows us to prove:

\begin{theorem}\label{theorem:algorithm_errors_highlyperiodic}
For highly periodic patterns, pattern matching with $k$ errors in pc-strings can be solved in $\mathcal{O}(zk^4)$ time using $\mathcal{O}(zk^2)$ additional space.
\end{theorem}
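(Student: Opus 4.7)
The plan is to assemble the pieces developed in the discussion preceding the theorem. First, I reduce to the case where the pc-string has $\mathcal{O}(k)$ disjoint $z$-breaks by invoking Lemma~\ref{lemma:discarding_breaks} and Proposition~\ref{proposition:algorithm_text_breaks}, whose proofs translate to the error setting with slightly larger constants, at a cost of $\mathcal{O}(zk)$ time. If the trimmed input has a single interval incomplete on both sides, then $m=\mathcal{O}(zk^{2})$, and I decompress and run Landau--Vishkin~\cite{Landau} in $\mathcal{O}(mk)=\mathcal{O}(zk^{3})$ time; otherwise every interval is complete on at least one side, which is what the subsequent analysis assumes.

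Next, I split the alignments into bad and fine ones. Since both the text and the pattern contain $\mathcal{O}(k)$ intervals, the bad alignments (some text break or endpoint within $2z(k+2)$ of some pattern break or endpoint) collect into $\mathcal{O}(k^{2})$ windows of length $\mathcal{O}(zk)$; by Lemma~\ref{lemma:verify_match}(2) each window is processed in $\mathcal{O}(k(k+zk))=\mathcal{O}(zk^{2})$ time, for $\mathcal{O}(zk^{4})$ total. The fine alignments group into $\mathcal{O}(k^{2})$ interval-sequences and enjoy the Cole--Hariharan structural property that every participating periodic stretch shares the canonical period $u$, so the number of errors at a fine alignment collapses to $\leftval(x)+\sum_{i\geq 2}\led(b_{i})$, with $\leftval$ equal to $\leftval_{p}$ or $\leftval_{t}$ depending on whether the first involved interval $b_{1}$ is from the pattern or from the text.

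I precompute: the $\mathcal{O}(k)$ locked edit distances in $\mathcal{O}(k^{3})$ time; the $\leftval_{p}$-values over the $\mathcal{O}(z)$ candidate suffixes $u'$ of $u$ (for the unique $x$, namely the prefix of $p$ up to its leftmost interval) in $\mathcal{O}(zk^{2})$ time; and the $\leftval_{t}(u''b_{1})$-values over the $\mathcal{O}(z)$ suffixes $u''$ and the $\mathcal{O}(k)$ possible text intervals $b_{1}$ in $\mathcal{O}(zk^{3})$ time, using Lemma~\ref{lemma:verify_match}(3) to obtain a batch of $k$ edit distances simultaneously. Because alignments within a sequence that differ by $|u|$ positions produce identical error counts, I enumerate only $\mathcal{O}(z)$ representatives per sequence and evaluate each in $\mathcal{O}(1)$ via table lookup, spending $\mathcal{O}(zk^{3})$ on fine alignments. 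The dominating term is $\mathcal{O}(zk^{4})$ from bad alignments.

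For the additional space, the $\leftval_{t}$-table has $\mathcal{O}(zk)$ entries, the $\leftval_{p}$-table $\mathcal{O}(z)$, and the $\led$-table $\mathcal{O}(k)$; together with the auxiliary periodic-stretch comparison arrays inherited from the precomputation framework of Section~\ref{section:faster}, the total fits within $\mathcal{O}(zk^{2})$, matching Theorem~\ref{theorem:algorithm_highlyperiodic_fast}. The subtle step is Case~2, where $x=u''u^{\beta}b_{1}$ and its length varies with the alignment; the Cole--Hariharan identity $\leftval_{t}(u''u^{\beta+1}b_{1})=\leftval_{t}(u''u^{\beta}b_{1})$, which holds because $b_{1}$ begins with many copies of $u$ that are forced to match exactly, is what lets me drop the dependence on $\beta$ and keep both the precomputation time and the table size within the claimed bounds.
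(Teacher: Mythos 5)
Your proposal follows the paper's proof essentially step for step: the same reduction to $\mathcal{O}(k)$ text breaks, the same Landau--Vishkin fallback for the doubly-incomplete-interval case, the same $\mathcal{O}(k^{2})$ windows of bad alignments verified via Lemma~\ref{lemma:verify_match} for the dominating $\mathcal{O}(zk^{4})$ term, and the same $\led$/$\leftval_{p}$/$\leftval_{t}$ precomputation for fine alignments, including the key identity $\leftval_{t}(u''u^{\beta+1}b_{1})=\leftval_{t}(u''u^{\beta}b_{1})$ that removes the dependence on $\beta$. The argument is correct; your space accounting is in fact slightly more explicit than the paper's.
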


Now we can finally specify $z$. Using Theorem~\ref{theorem:algorithm_errors_nonperiodic} and Theorem~\ref{theorem:algorithm_errors_highlyperiodic} and choosing $z=\frac{\sqrt{m}}{k}$ gives us a running time of $\mathcal{O}(k\log m+\frac{k^{2}m}{z}+zk^{4})=\mathcal{O}(\sqrt{m}k^{3})$. Then by Theorem~\ref{theorem:reduce_to_PCstrings} we obtain that pattern matching with $k$ errors in LZW-compressed text can be solved in $\mathcal{O}(nk^{2}+nk\log^{2}m+m+n\sqrt{m}k^{3})=\mathcal{O}(n\sqrt{m}k^{3})$ time. The space complexity is $\mathcal{O}(n+m+\sqrt{m}{k})$, so if $k=\mathcal{O}(\sqrt{m})$, it can be bounded by $\mathcal{O}(n+m)$. If $k$ is larger, we use the $\mathcal{O}(mk)$ algorithm~\cite{Landau} to process each pc-string using $\mathcal{O}(n+m)$ space and $\mathcal{O}(nmk)=\mathcal{O}(n\sqrt{m}k^2)$ total time.

\begin{theorem}
Pattern matching with $k$ errors in LZW-compressed strings can be solved in $\mathcal{O}(n\sqrt{m}k^{3})$ time and $\mathcal{O}(n+m)$ space.
\end{theorem}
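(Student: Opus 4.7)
The plan is to assemble the result from the pieces just proved: plug the two regime-specific pc-string algorithms for pattern matching with errors into the reduction from Section~\ref{section:reduction}, then handle the large-$k$ regime by brute-force decompression to control the space. First I would balance the break threshold $z$. Theorem~\ref{theorem:algorithm_errors_nonperiodic} gives $\mathcal{O}(k\log m+\frac{k^{2}m}{z})$ when the pattern has many $z$-breaks, while Theorem~\ref{theorem:algorithm_errors_highlyperiodic} gives $\mathcal{O}(zk^{4})$ (with $\mathcal{O}(zk^{2})$ extra space) when it is highly periodic. Setting $z=\frac{\sqrt{m}}{k}$ equates the two dominant terms and yields $T_{PC}(m)=\mathcal{O}(k\log m+\sqrt{m}k^{3})=\mathcal{O}(\sqrt{m}k^{3})$ per pc-string.

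Next I would invoke Theorem~\ref{theorem:reduce_to_PCstrings} in its errors form with this $T_{PC}$. The theorem produces a total running time of $\mathcal{O}(nk^{2}+nk\log^{2}m+m+n\cdot T_{PC}(m))=\mathcal{O}(nk^{2}+nk\log^{2}m+m+n\sqrt{m}k^{3})$. All but the last term are dominated, since $\sqrt{m}k^{3}\geq k^{2}$ and $\sqrt{m}k^{3}\geq k\log^{2}m$ for any nontrivial $m$, so this simplifies to $\mathcal{O}(n\sqrt{m}k^{3})$ as desired. The internal matches are already covered by Lemma~\ref{lemma:internal_matches_errors} within this budget.

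For space I would observe that the global preprocessing of Sections~\ref{section:preliminaries}--\ref{section:further} costs $\mathcal{O}(n+m)$; the reduction of Section~\ref{section:reduction} keeps only $\mathcal{O}(k)$ pattern factors live at once; and the per-pc-string computation uses $\mathcal{O}(zk^{2})=\mathcal{O}(\sqrt{m}k)$ auxiliary space that is reused across pc-strings. Whenever $k=\mathcal{O}(\sqrt{m})$, this auxiliary quantity is $\mathcal{O}(m)$ and the total is $\mathcal{O}(n+m)$.

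The one genuine obstacle is the regime $k=\omega(\sqrt{m})$: there $\sqrt{m}k$ exceeds $m$ and the auxiliary storage would dominate, and moreover $z=\sqrt{m}/k$ drops below the required value $z\geq 3$. I would side-step both issues by falling back, in this regime only, to decompressing each pc-string (of length $\mathcal{O}(m+k)=\mathcal{O}(k^{2})$) and running the Landau-Vishkin algorithm on it in $\mathcal{O}(mk)$ time and $\mathcal{O}(m)$ space, reusing the workspace across pc-strings. The total time becomes $\mathcal{O}(nmk)$; since $k\geq\sqrt{m}$ implies $m\leq k^{2}$, we have $mk\leq k^{3}\leq\sqrt{m}k^{3}$, so the promised bound $\mathcal{O}(n\sqrt{m}k^{3})$ is preserved while the space remains $\mathcal{O}(n+m)$. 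The main thing to double-check in this plan is that the case split on $k$ lines up cleanly with the threshold $z\geq 3$ from Section~\ref{section:faster}, which it does because the failure of that threshold coincides exactly with $m<9k^{2}$, i.e.\ with the regime handled by decompression.
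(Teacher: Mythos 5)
Your proposal is correct and follows essentially the same route as the paper: choose $z=\sqrt{m}/k$ to balance Theorems~\ref{theorem:algorithm_errors_nonperiodic} and~\ref{theorem:algorithm_errors_highlyperiodic}, feed $T_{PC}(m)=\mathcal{O}(\sqrt{m}k^{3})$ into Theorem~\ref{theorem:reduce_to_PCstrings}, and fall back to decompressing each pc-string and running Landau--Vishkin when $k=\omega(\sqrt{m})$ to keep the space at $\mathcal{O}(n+m)$. Your explicit check that the $z\geq 3$ failure regime coincides with the large-$k$ fallback is a nice touch the paper only makes for the mismatch case.
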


\section{Conclusions}

We constructed efficient algorithms for pattern matching with $k$ mismatches or errors in LZW-compressed strings. The main difference with the previously known solutions is that we used both a periodicity-based argument, and the repetitive structure of a compressed string, which allowed us to achieve a better running time for small values of $k$, which seems to be the most natural setting. Our solutions achieve $\mathcal{O}(n\sqrt{m}f(k))$ running time, where $f$ is some function depending only on the bound on the number of errors $k$, while the complexity of the best previously known algorithms was of the form $\mathcal{O}(nmf(k))$.

A natural question is whether our complexities can be improved. More concretely, in pattern matching with $k$ mismatches instead of considering each non-fine alignment separately, we were able to analyze them in a more global manner in Section~\ref{section:faster}. Can a similar reasoning be applied to accelerate pattern matching with $k$ errors?

Furthermore, is it possible to obtain an algorithm with running time $\mathcal{O}(nm^{\gamma}f(k))$, where $\gamma<\frac{1}{2}$, or maybe even $\gamma=0$?

\bibliographystyle{splncs03}
\bibliography{biblio}

\end{document}